\documentclass[sigconf,nonacm]{acmart}

\usepackage[utf8]{inputenc}

\usepackage{mathtools}
\usepackage{enumitem}
\usepackage{makecell}

\usepackage{tikz}

\settopmatter{printfolios=true}

\AtBeginDocument{}

\newcommand\blfootnote[1]{%
  \begingroup
  \renewcommand\thefootnote{}\footnote{#1}%
  \addtocounter{footnote}{-1}%
  \endgroup
}

\begin{document}

\title{Multiprocessor Scheduling with Memory Constraints: Fundamental Properties and Finding Optimal Solutions}

\author{P\'al Andr\'as Papp}
\orcid{0009-0005-6667-802X}
\email{pal.andras.papp@huawei.com}
\affiliation{
  \department{Computing Systems Lab}
  \institution{Huawei Zurich Research Center}
  \city{Zurich}
  \country{Switzerland}
}

\author{Toni B\"ohnlein}
\email{toni.boehnlein@huawei.com}
\orcid{0009-0001-2152-022X}
\affiliation{
  \department{Computing Systems Lab}
  \institution{Huawei Zurich Research Center}
  \city{Zurich}
  \country{Switzerland}
}

\author{Albert-Jan N. Yzelman}
\email{albertjan.yzelman@huawei.com}
\orcid{0000-0001-8842-3689}
\affiliation{
 \department{Computing Systems Lab}
  \institution{Huawei Zurich Research Center}
  \city{Zurich}
  \country{Switzerland}
}

\begin{abstract}
We study the problem of scheduling a general computational DAG on multiple processors in a 2-level memory hierarchy. This setting is a natural generalization of several prominent models in the literature, and it simultaneously captures workload balancing, communication, and data movement due to cache size limitations. We first analyze the fundamental properties of this problem from a theoretical perspective, such as its computational complexity. We also prove that optimizing parallelization and memory management separately, as done in many applications, can result in a solution that is a linear factor away from the optimum.

On the algorithmic side, we discuss a natural technique to represent and solve the problem as an Integer Linear Program (ILP). We develop a holistic scheduling algorithm based on this approach, and we experimentally study its performance and properties on a small benchmark of computational tasks. Our results confirm that the ILP-based method can indeed find considerably better solutions than a baseline which combines classical scheduling algorithms and memory management policies.
\end{abstract}

\begin{CCSXML}
<ccs2012>
   <concept>
       <concept_id>10003752.10003753.10003761.10003762</concept_id>
       <concept_desc>Theory of computation~Parallel computing models</concept_desc>
       <concept_significance>500</concept_significance>
       </concept>
   <concept>
       <concept_id>10003752.10003809.10003636.10003808</concept_id>
       <concept_desc>Theory of computation~Scheduling algorithms</concept_desc>
       <concept_significance>300</concept_significance>
       </concept>
   <concept>
       <concept_id>10003752.10003809.10003716.10011141.10010045</concept_id>
       <concept_desc>Theory of computation~Integer programming</concept_desc>
       <concept_significance>300</concept_significance>
       </concept>
 </ccs2012>
\end{CCSXML}

\ccsdesc[500]{Theory of computation~Parallel computing models}
\ccsdesc[300]{Theory of computation~Scheduling algorithms}
\ccsdesc[300]{Theory of computation~Integer programming}

\keywords{Multi-BSP, Scheduling, Red-blue pebble game, Limited memory, Integer Linear Programming}

\maketitle

\blfootnote{\copyright P\'al Andr\'as Papp, Toni B\"ohnlein and Albert-Jan N. Yzelman, 2025. This is the author's full version of the work, posted here for personal use. Not for redistribution. The definitive version was published in the 54th International Conference on Parallel Processing (ICPP 2025), https://doi.org/10.1145/3754598.3754676.}

\section{Introduction}

The efficient execution of complex computations is a fundamental problem in parallel computing, with countless applications ranging from scientific simulations to machine learning. In general, a static computational task is often modeled as a Directed Acyclic Graph (DAG), where the nodes represent the specific operations to execute, and the directed edges represent dependencies between them, i.e.\ the output of one operation is required as an input for another. These dependencies not only describe precedence constraints between the operations, but also imply data movement when the two nodes are assigned to different processors.

One of the most important aspects of this problem is to efficiently parallelize the execution on multiple processors. This is in itself a rather intricate task: we need to distribute the workload in a balanced way between the processors, and, at the same time, minimize the communication between them. These contradicting objectives often lead to a delicate trade-off. Finding the best parallel schedule for DAGs has been studied exhaustively, but for general DAGs, most of these works focus on very simple models in order to keep the problem tractable.

Another crucial aspect to optimize is the data movement between different levels of the memory hierarchy (vertical I/O). For instance, since the size of the cache is limited, we may need to repeatedly load values from RAM during the computation, which can be very costly. The prominent tool to capture this cost for general DAGs in a two-level memory hierarchy is the red-blue pebble game of Hong and Kung~\cite{hongkung}. However, due to the complexity of the problem, most works in this model focus on I/O lower bounds for concrete computational tasks.

When we aim to capture both parallel execution and memory limitations at the same time, this leads to an even more complex problem where finding the optimal schedule for general DAGs is remarkably challenging. While there are several models in the literature, like the Multi-BSP model of Valiant~\cite{multiBSP2}, which captures all of these aspects, most of the results in these models focus on the optimal scheduling of either a concrete (often very structured) computational DAG from a specific application, or a small subclass of DAGs at best. Hence, while the scheduling of general (irregular) workloads in these more realistic models is a fundamental problem, there is very little known about the theoretical properties of this problem or its optimal solutions.

The main goal of our work is to analyze the problem of efficiently scheduling a general computational DAG in a model that captures both parallel execution and memory constraints. As a first step, we define a natural new model for this problem that can be understood either as the restriction of the Multi-BSP model to $2$ levels on general DAGs, or as the generalization of multiprocessor red-blue pebbling to weighted DAGs. We refer to this DAG scheduling problem as MBSP scheduling, and we study its fundamental theoretical properties: we show that even separate parts of this problem are NP-hard, and that a synchronous and asynchronous interpretation of the model can lead to very different optimal schedules.

We then consider a natural two-stage approach towards scheduling, which first aims to find an efficient parallel schedule without the memory constraints, and then applies a cache management policy on this schedule to minimize the I/O cost. This separation is realistic in many application domains, where e.g.\ the scheduling and the memory management are optimized by different hardware/software components. We prove that in the worst case, this two-stage approach can return a schedule which is a linear factor away from the actual optimum, even if both stages are optimal separately. This indicates that for any theoretical guarantees, it is crucial to use a holistic approach to the entire problem.

On the algorithmic side, we present a formulation of the scheduling task as an Integer Linear Program (ILP). We discuss several techniques to make this approach more efficient on computational DAGs of moderate size, including a divide-and-conquer technique to split the problem into multiple smaller parts. We compare this ILP-based scheduler to a two-stage baseline that combines a recent multi-processor scheduling algorithm with a strong cache management policy. Our experiments confirm that the holistic ILP-based solution can indeed find significantly better schedules than the baseline in many cases: in our main experiments, the ILP-based scheduler obtains a $0.76\times$ factor geometric-mean cost reduction over the baseline. We also analyze how these improvements depend on different parameters of the model, and our experiments also provide interesting insights into some further aspects of DAG scheduling problems.

\subsubsection*{Paper organization}

After an overview of related work, Section~\ref{sec:model} defines our scheduling model. Section~\ref{sec:twostage} describes the two-stage approach and our main theorem. Section~\ref{sec:moretheory} adds further theoretical results. Section~\ref{sec:ilp} describes our ILP-based approach, and Section~\ref{sec:experiments} discusses our experimental results.

Note that in order to improve readability, many technical details are deferred to the appendix. In particular, Appendix~\ref{app:model} discusses the model definition in more detail, Appendix~\ref{app:proofs} contains the proof details for our theorems, Appendix~\ref{app:ilps} elaborates on the ILP methods, and Appendix~\ref{app:experiments} discusses the details of our experiments.

\section{Related Work}

The optimal parallel scheduling of computational DAGs has been studied in countless works and numerous different models since the 1960s. However, the vast majority of theoretical works on the topic consider very simple models, such as $(P|prec|C_{\text{max}})$, which is essentially equivalent to the PRAM model without any communication costs~\cite{DAG2proc1,DAG2proc2,weighted1,weighted3,approx1,approx2,approx3,approx4}, or $(P|prec;c|C_{\text{max}})$, where communication is only modeled by a fixed latency cost, allowing to communicate any amount of data in this time window~\cite{commDdef1,commDdef2,commDunit1,commDappR1,commDappR2,commDappR0,recent1,recent2,recent4}.

More sophisticated models, like BSP or LogP, are closely motivated by real-world computing architectures, and thus capture the actual scheduling cost much more accurately~\cite{BSPintro,LogP,BSPbook2,BSPbook1,Bsplib, BSPimpl1}. However, finding the optimal schedules in these models for general DAGs is very challenging. In these models, past research mostly focuses on schedules and lower bounds for specific computational DAGs from concrete applications, such as matrix multiplication~\cite{mccoll24memory,pebbling_matrices_hoefler}. There are only a few exceptions that consider general DAGs, which e.g.\ analyze the theoretical properties of BSP scheduling~\cite{BSP_DAG_opdas}, or experimentally evaluate heuristic schedulers in a custom BSP-like model~\cite{SPD}. Most of these models also do not consider memory limitations or I/O costs at all.

On the other hand, the red-blue pebble game captures the I/O cost of executing a computation in a two-level memory hierarchy, but only on a single processor~\cite{hongkung}. The results on this model include I/O lower bounds for several concrete computations from practice~\cite{elango2014characterizing,RBpebbling9,jain2020spectral}, as well as complexity and inapproximability results for general DAGs~\cite{demaine,pebbling_papp,pebbling_carpenter,MPP_opdas,RBpebbling8}. There are also several attempts to generalize this model to multiple processors or deeper hierarchies~\cite{MPP_opdas,pebbling_gleinig,pebbling_matrices_hoefler}, but this again leads to more complex models, so authors mostly focus on special subclasses of DAGs or concrete computational tasks when studying them. Besides the red-blue pebble game, there are also several alternative models that define or study I/O costs on specific subclasses of DAGs~\cite{new1,new2}.

The two closest models to ours in previous works are the Multi-BSP model of Valiant~\cite{multiBSP1, multiBSP2} and the multiprocessor red-blue pebble game of B\"ohnlein \textit{et al.}~\cite{MPP_opdas}. Multi-BSP is a generalization of BSP to architectures of arbitrary depth; it is a very broad model that captures all the aspects mentioned before. However, due to this generality, the few works on the Multi-BSP model all focus on the optimal execution of a highly structured computational DAG from a concrete application. In particular, even with only $2$ levels, to our knowledge, Multi-BSP has not been studied or even rigorously defined for general DAGs. On the other hand, multiprocessor red-blue pebbling is a special case of our model that is restricted to unit weights for both computation time and memory use, and to a synchronous setting. The study of this model in~\cite{MPP_opdas} only focuses on theoretical properties and complexity results, and does not consider algorithms to find a good schedule.

\section{Model definition} \label{sec:model}

Our model can be best described by combining the terminology of the Multi-BSP model and the red-blue pebble game. Similarly to red-blue pebbling, we assume a two-level memory hierarchy, where each processor has a fast memory (e.g.\ cache) of some limited capacity $r$, and the processors share a slow memory (e.g.\ RAM) of unlimited capacity. We use red and blue pebbles on a node $v$ to represent the fact that the output data of $v$ is currently kept in fast and slow memory, respectively. We have a different kind of red pebble for each processor (to denote the separate caches), but only a single kind of blue pebble (for the shared RAM).

The input of a concrete MBSP problem instance consists of:
\begin{itemize}[topsep=4pt, itemsep=3pt]
 \item a computational DAG $G=(V,E)$, where each node also has a compute weight $\omega:V \rightarrow \mathbb{R}_{\geq 0}$ (the time it takes to execute the operation), and a memory weight $\mu:V \rightarrow \mathbb{R}_{\geq 0}$ (the amount of memory its output requires);
 \item a computing architecture, consisting of a number of processors $P \in \mathbb{Z}_{> 0}$, a fast memory capacity $r \in \mathbb{R}_{\geq 0}$ that is identical for each processor, and the global parameters $g, L \in \mathbb{R}_{\geq 0}$ of the BSP model, where $g$ is the cost of sending a single unit of data, and $L$ is the cost of synchronization between the processors.
\end{itemize}
We denote the number of nodes in the DAG by $n\!=\!|V|$, and we refer to processors by numbering them from $1$ to $P$. In general, we will also use $[z]$ as a shorthand notation for the set of integers $\{1, ..., z\}$.

\subsection{Transition rules and pebbling sequence}

The current state of our schedule can described by a configuration $\zeta=(R_1, ..., R_{P}, B)$, where $R_p$ is the set of nodes with a red pebble of processor $p \! \in \! [P]$ (i.e.\ values currently in the cache of processor $p$), and $B$ is the set of nodes with a blue pebble (i.e.\ values in RAM). We require that every configuration in our schedule fulfills the memory bound, i.e.\ $\sum_{v \in R_p} \mu(v) \, \leq \, r$ for all $p \! \in \! [P]$.

In the initial configuration of our schedule, $B$ only contains the source nodes of the DAG (i.e., the inputs of the computation), and $R_p = \emptyset$ for all $p \! \in \! [P]$. By the terminal configuration, we need to have all the sink nodes of the DAG (i.e., the outputs) contained in $B$. During the schedule, we can use the following transition rules to move from the current to the next configuration:

\begin{enumerate}[topsep=4pt, itemsep=4pt]
\item LOAD$_{p,v}$: if node $v$ already has a blue pebble, then place a red pebble of processor $p$ on $v$. The cost of this rule is $cost(\text{LOAD}_{p,v}) = \mu(v) \cdot g$.
\item SAVE$_{p,v}$: if node $v$ already has a red pebble of processor $p$, then place a blue pebble on $v$. The cost of this rule is $cost(\text{SAVE}_{p,v}) = \mu(v) \cdot g$.
\item COMPUTE$_{p,v}$: if $v$ is not a source node, and all the parents of $v$ have a red pebble of processor $p$, then place a red pebble of processor $p$ also on $v$. The cost of this rule is $cost(\text{COMPUTE}_{p,v}) = \omega(v)$.
\item DELETE$_{p,v}$: remove a red pebble of processor $p$ from any node $v$. The cost of this rule is $cost(\text{DELETE}_{p,v}) = 0$.
\end{enumerate}

A pebbling sequence on processor $p$ is a sequence of configurations $(\zeta_0, \zeta_1, ..., \zeta_T)$, or equivalently, a sequence of transition rules $\Psi_p = (\tau_1, ..., \tau_T)$ from the list above, such that $\zeta_i$ can be obtained from $\zeta_{i-1}$ by using transition rule $\tau_i$, for all $i \! \in \! [T]$. Recall that all configurations $\zeta_i$ must fulfill the memory bound. The cost of the sequence is understood simply as
$\sum_{i \in [T]} cost(\tau_i)$.

\subsection{Supersteps and schedule}

Similarly to Multi-BSP, our schedule is organized into supersteps, each of which consists of a computation and a communication phase. This is a natural way to form schedules in a synchronous model like (Multi-)BSP. For our asynchronous model variant, we also define our schedules in the same fashion for the sake of simplicity; however, the splitting into computation and communication phases plays no role in this case.

Let us define a superstep on processor $p$ as a pebbling sequence $\Psi_p$ that can be obtained by concatenating four subsequences:
\[ \Psi_p = \Psi_{comp} \circ \Psi_{save} \circ \Psi_{del} \circ \Psi_{load} \, , \]
where
\begin{itemize}[topsep=3pt]
 \item $\Psi_{comp}$ only consists of COMPUTE$_{p,v}$ and DELETE$_{p,v}$ steps,
 \item $\Psi_{save}$ only consists of SAVE$_{p,v}$ steps,
 \item $\Psi_{del}$ only consists of DELETE$_{p,v}$ steps,
 \item $\Psi_{load}$ only consists of LOAD$_{p,v}$ steps.
\end{itemize}

Note that more in line with the BSP model, we could also simply assume $\Psi_p=\Psi_{comp} \circ \Psi_{comm}$, where $\Psi_{comp}$ only consists of compute and delete steps, and $\Psi_{comm}$ only consists of save, load and delete steps. However, such pebbling can always be reordered into a sequence of the form $\Psi_{comm} = \Psi_{save} \circ \Psi_{del} \circ \Psi_{load}$ without affecting its validity or increasing its cost.

A \emph{superstep $\Psi$} is then a tuple $\Psi=(\Psi_1, ..., \Psi_P)$, where $\Psi_p$ is a superstep on processor $p$ for $p \in [P]$. We point out that the formal definition here is actually a bit more technical, since the set $B$ is shared among the processors, and hence the pebbling sequences of different processors are not independent. However, the set $B$ is only modified during the $\Psi_{save}$ phase and only queried during the $\Psi_{load}$ phase. As such, we can formally define pebbling sequences using a separate artificial set $B_p$ for each processor, and then setting $B_p \leftarrow \bigcup_{p \in [P]} B_p$ at the end of the subsequence $\Psi_{save}$. We defer the details of this formal definition to the full version of the paper.

Finally, a MBSP \emph{schedule} is a sequence of supersteps
\[ S=(\Psi^{(1)}, \Psi^{(2)}, ..., \Psi^{(m)}) \] such that the finishing configuration of $\Psi^{(i)}$ is identical to the starting configuration of $\Psi^{(i_{\!}+_{\!}1)}$ for all $i \! \in \! [m_{\!}-_{\!}1]$, and furthermore, $\Psi^{(1)}$ starts with an initial configuration, and $\Psi^{(m)}$ ends with a terminal configuration.

\subsection{The cost of a schedule}

It only remains to define the cost function we aim to minimize in our problem. We discuss both a synchronous and an asynchronous interpretation of our schedules, which only differ in the cost function used to evaluate the total timespan of a schedule. The synchronous cost is very close to the (Multi-)BSP model in spirit: it considers the concrete superstep structure of our schedule, and also uses the synchronization parameter $L$. In contrast to this, the asynchronous cost is more similar to makespan metrics in classical scheduling models like $(P|prec|C_{\text{max}})$.

In the synchronous case, the cost of a superstep $\Psi$ can be simply defined as
\begin{equation*}
\begin{split}
cost(\Psi) = \max_{p \in [P]} \, cost(\Psi_{comp}) + \max_{p \in [P]} \, cost(\Psi_{save}) \, + \\ + \max_{p \in [P]} \, cost(\Psi_{load}) + L \, ,
\end{split}
\end{equation*}
and the cost of the whole schedule $S$ is simply the sum of these costs over all supersteps, i.e.\ $\sum_{i \in [m]} \, cost(\Psi^{(i)})$.

In the asynchronous case, we inductively define the finishing time $\gamma$ of each transition step in our schedule. Consider the entire sequence of transitions $(\tau_1, ...., \tau_T)$ on a given processor $p$ over all the supersteps. Let $\gamma(0) = 0$. For any transition $\tau_i$, we set
\begin{itemize}[topsep=3pt, itemsep=3pt]
 \item $\gamma(\tau_i) = \gamma(\tau_{i-1}) + cost(\tau_i)$, when $\tau_i$ is a save, compute or delete operation,
 \item $\gamma(\tau_i) = \max(\, \gamma(\tau_{i-1}) \, , \, \Gamma(v) \, ) + cost(\tau_i)$ when $\tau_i$ is a load operation,
\end{itemize}
where $\Gamma(v)$ is an auxiliary function to define the time when a node $v$ first becomes available in slow memory. More specifically, we consider the first superstep that has a SAVE$_{p,v}$ transition for $v$, and we define $\Gamma(v)$ as the finishing time $\gamma(\tau)$ of the first such transition step. This way, $\gamma(\tau_i)$ is indeed always well-defined in a valid schedule, and it expresses the time when $\tau_i$ is finished in an asynchronous execution of the schedule.

Finally, if we use $\tau_T\,\!^{(p)}$ to denote the last transition on processor $p$, then the asynchronous cost of the whole schedule $S$ is defined simply as $\max_{p \in [P]} \, \gamma(\tau_T\,\!^{(p)})$.

\section{The two-stage approach} \label{sec:twostage}

On a high level, MBSP scheduling can also be understood as the combination of two problems. Firstly, we need to find an efficient parallel schedule, i.e.\ decide which subtasks to execute on specific processors and time steps, in order to balance the workload but also minimize communication. Secondly, we need to develop a cache management policy on each processor, i.e.\ decide which values to load and evict throughout the schedule to satisfy the memory constraint, but also keep I/O costs low.

In general, these two subproblems are heavily interconnected: the memory constraints can also drastically influence the optimal way to develop the parallel schedule. However, in many practical cases, e.g.\ when the cache management policy of a system cannot be influenced directly, we have no other option than to handle these two stages separately. Since multiprocessor scheduling and red-blue pebbling are both rather challenging, it is also tempting from the algorithmic side to separate the two problems. That is, given an input problem, we can
\begin{itemize}[topsep=3pt]
 \item first find a good multi-processor schedule for the DAG, according to e.g.\ the BSP model, not considering the memory bound at this stage, and then
 \item find a good memory management policy for each of the processors in this schedule, to minimize I/O cost under the given memory constraint.
\end{itemize}
Our main goal is to compare this two-stage method to directly finding an optimal schedule for the MBSP problem as a whole. 

Note that it requires a further technical step to convert such a two-step solution to an MBSP schedule. In particular, the supersteps of the BSP schedule from the first stage may need to be further split up to obtain MBSP supersteps: in order to execute the sequence of computations in a given compute phase, we may need further I/O operations in-between to load new values from slow memory. In our implementations, we also use such a conversion: we form new supersteps for MBSP by splitting each BSP compute phase into maximally long segments of compute steps that can still be executed without a new I/O operation. The resulting compute phases ensure that a valid MBSP schedule exists; we can then combine this with a cache management policy in the second stage, always loading the new values needed for the next superstep, and evicting e.g.\ the least recently used values when required by the memory constraint.

\subsection{Proof of suboptimality}

As our main theoretical result, we show that this two-stage approach has strong limitations: in the worst case, the resulting cost is very far from the optimum.

\begin{theorem} \label{th:twostage}
The two-phase approach can return a solution that is a $\Theta(n)$ factor away from the optimum, even if both phases are optimal separately.
\end{theorem}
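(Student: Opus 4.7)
My plan is to construct an instance with $P=2$ processors on which the BSP-optimum is highly non-unique, and the two-stage algorithm is then free to pick a BSP-optimum whose compute order forces the memory-management stage into $\Theta(n)$ extra source reloads. Concretely, I would take cache size $r = n$, $g = 1$, any $L$, and a DAG consisting of two source nodes $s_1, s_2$ of memory weight $r-1$ together with $n$ sink nodes $v_1, \ldots, v_n$ of zero memory weight and unit compute weight, where $v_i$ depends on $s_1$ for odd $i$ and on $s_2$ for even $i$.

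For the first (pure BSP) stage, the $v_i$ are mutually independent and the DAG has no inter-processor edges, so any balanced assignment placed in a single superstep achieves the BSP-optimum $n/2 + L$. In particular, the schedule that sends $v_1, \ldots, v_{n/2}$ to processor $1$ and $v_{n/2+1}, \ldots, v_n$ to processor $2$ and executes them in natural index order is BSP-optimal, and on each processor the compute order alternates between $s_1$- and $s_2$-dependent nodes. The second (memory-management) stage may only insert load, save, and delete steps; it cannot reorder compute steps. Since the cache of size $r$ fits at most one of the two sources together with the current output, every single compute step must be preceded by a fresh source load of cost $(r-1)g = \Theta(n)$, so the resulting MBSP schedule has total cost $\Omega(n \cdot (r-1) g) = \Omega(n^2)$.

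In contrast, a dedicated MBSP schedule can assign all odd-indexed sinks to processor $1$ and all even-indexed sinks to processor $2$. Each processor then loads its source exactly once, at cost $(r-1)g = \Theta(n)$, and performs its $n/2$ computations with no further loads, yielding a total cost of $O(n)$. Dividing the two costs gives the claimed $\Theta(n)$ ratio. The main obstacle is proving the lower bound for the two-stage schedule: I need to rule out any cache-management strategy that could avoid the per-compute source reloads. This reduces to the two facts that stage $2$ is constrained to respect the interleaved compute order of stage $1$ and that the cache capacity is strictly smaller than the combined size of the two sources, so that $s_1$ and $s_2$ cannot both be resident at once. Once this lower bound is in place, the rest of the argument is a direct cost comparison, which also confirms (along the way) that the interleaved schedule is indeed BSP-optimal, since its BSP cost depends only on the per-processor workload and not on the intra-processor compute order.
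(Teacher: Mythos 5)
There is a genuine gap in your first-stage optimality claim. You assert that the interleaved assignment ($v_1,\ldots,v_{n/2}$ to processor~1, the rest to processor~2) is BSP-optimal because ``the DAG has no inter-processor edges.'' But under that assignment the source-to-sink edges \emph{do} cross processors: each processor computes both $s_1$-dependent and $s_2$-dependent sinks, so it must acquire both sources, each of memory weight $r-1=\Theta(n)$. In any BSP variant that charges for moving the source data (whether sources are computed on one processor and sent, or loaded from slow memory --- the paper's Appendix~B.1 analyzes both), the interleaved assignment pays $\Theta(n)$ more communication than the segregated assignment (odd-indexed sinks with $s_1$ on one processor, even-indexed with $s_2$ on the other), while the compute cost $n/2$ is identical. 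So the schedule you hand to stage~2 is not a BSP optimum, and the theorem's hypothesis ``both phases are optimal separately'' is violated. You cannot repair this by shrinking the source weights, because then both sources fit in cache and the thrashing disappears; the large weights that create the stage-2 penalty are exactly what makes the interleaved assignment lose in stage~1.

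Even setting that aside (e.g.\ in a model where input distribution is free), your construction only shows that \emph{one adversarially chosen} BSP optimum --- distinguished from the good one merely by how independent compute steps are serialized within a single superstep, which BSP cost does not even see --- is bad for memory management; the MBSP-optimal assignment is itself among the BSP optima. The paper's construction avoids both problems by replacing your independent sinks with two \emph{chains} whose nodes alternately depend on two source groups $H_1, H_2$ of $d$ unit-weight nodes each: the chain precedence forces whichever processor owns a chain to access $H_1$ and $H_2$ alternately (no serialization freedom), the per-group acquisition cost $d\cdot g$ is amortized against the chain's compute cost $m \gg d\cdot g$ so that assigning one chain per processor is the essentially unique BSP optimum, and that optimum is provably different from the MBSP optimum (which splits by source group rather than by chain). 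You would need an analogous precedence structure to make the bad access pattern forced rather than chosen.
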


\renewcommand*{\proofname}{Proof sketch}

\begin{proof}
Consider a modified version of the construction from~\cite{pebbling_papp}. For some parameter $d=\Theta(n)$, we take two groups $H_1$, $H_2$ of $d$ source nodes each, and two chains of length $m$ each. The chain nodes also have incoming edges from the two groups $H_1$, $H_2$ in an alternating fashion, as shown in Figure~\ref{fig:zipper}. The construction allows us to set $m=\Theta(n)$ and $d=\Theta(n)$. Let the cache size be $r=d+2$, let us have $P=2$ processors, and $g=O(1)$, $L=0$. Our proof holds for both the synchronous and asynchronous cost model.

If we ensure that $d \! \cdot \! g < m$, then the optimal scheduling strategy in the BSP models is to assign a separate chain to both processors. Intuitively, this means that the processors can compute along their respective chains without any communication. This has a compute cost of $m$, and a communication cost of only $d \cdot g$ in the beginning to get the values of $H_1$ and $H_2$ to both processors (with some technical details depending on the concrete BSP variant). One can show that the alternative schedules indeed all yield a higher cost. If we use the two processors to compute all the children of $H_1$ and $H_2$, respectively, then every chain node needs to be sent to the other processor, giving a higher communication cost of $2\!\cdot \! m \! \cdot \! g$. If we compute both chains on the same processor, this increases the compute cost by $m$, so it is also suboptimal if $d \! \cdot \! g < m$. A more detailed case analysis is deferred to the full version of the paper.

Given this best BSP schedule from the first stage, if we set $r=d+2$, then any caching policy needs to repeatedly alternate between loading the values of $H_1$ and $H_2$ from slow memory, resulting in a total of $d \cdot m$ I/O operations in an MBSP schedule.

In contrast to this, an optimal MBSP schedule can assign all children of $H_1$ to one processor, and all children of $H_2$ to the other. The compute cost here is again $m$, but each chain node only incurs two I/O steps: the processors exchange the last computed chain node values by saving them into slow memory and then loading the value saved by the other processor. With that, the total I/O cost in the schedule is only $(2 \! \cdot \! m +d)\cdot g$. The two assignments of the chain nodes are also illustrated in Figure~\ref{fig:zipper_assign}.

\begin{figure}[t]
    \centering
    \begin{tikzpicture}

    \begin{scope}[thick]
    \draw (0pt,0pt) -- (15pt,14pt);
    \draw (0pt,23pt) -- (15pt,17.5pt);
    \draw (0pt,35pt) -- (15pt,21pt);

    \draw (0pt,65pt) -- (15pt,79pt);
    \draw (0pt,88pt) -- (15pt,82.5pt);
    \draw (0pt,100pt) -- (15pt,86pt);
    \end{scope}

    \begin{scope}[thick, arrows=-stealth]
    \draw (15pt,14pt) -- (145pt,14pt);
    \draw (15pt,17.5pt) -- (145pt,17.5pt);
    \draw (15pt,21pt) -- (145pt,21pt);

    \draw (15pt,79pt) -- (145pt,79pt);
    \draw (15pt,82.5pt) -- (145pt,82.5pt);
    \draw (15pt,86pt) -- (145pt,86pt);
    \end{scope}

    \draw[white, fill=white] (25.5pt,16.5pt) rectangle (28pt,18.5pt);
    \draw[white, fill=white] (55.5pt,16.5pt) rectangle (58pt,18.5pt);
    \draw[white, fill=white] (85.5pt,16.5pt) rectangle (88pt,18.5pt);
    \draw[white, fill=white] (115.5pt,16.5pt) rectangle (118pt,18.5pt);

    \draw[white, fill=white] (25.5pt,81.5pt) rectangle (28pt,83.5pt);
    \draw[white, fill=white] (55.5pt,81.5pt) rectangle (58pt,83.5pt);
    \draw[white, fill=white] (85.5pt,81.5pt) rectangle (88pt,83.5pt);
    \draw[white, fill=white] (115.5pt,81.5pt) rectangle (118pt,83.5pt);

    \draw[white, fill=white] (24.5pt,20pt) rectangle (31pt,22pt);
    \draw[white, fill=white] (54.5pt,20pt) rectangle (61pt,22pt);
    \draw[white, fill=white] (84.5pt,20pt) rectangle (91pt,22pt);
    \draw[white, fill=white] (114.5pt,20pt) rectangle (121pt,22pt);

    \draw[white, fill=white] (24.5pt,78pt) rectangle (31pt,80pt);
    \draw[white, fill=white] (54.5pt,78pt) rectangle (61pt,80pt);
    \draw[white, fill=white] (84.5pt,78pt) rectangle (91pt,80pt);
    \draw[white, fill=white] (114.5pt,78pt) rectangle (121pt,80pt);

    \begin{scope}[thick, densely dotted]
    \draw (25.5pt,17.5pt) -- (28pt,17.5pt);
    \draw (24.5pt,21pt) -- (31pt,21pt);
    \draw (55.5pt,17.5pt) -- (58pt,17.5pt);
    \draw (54.5pt,21pt) -- (61pt,21pt);
    \draw (85.5pt,17.5pt) -- (88pt,17.5pt);
    \draw (84.5pt,21pt) -- (91pt,21pt);
    \draw (115.5pt,17.5pt) -- (118pt,17.5pt);
    \draw (114.5pt,21pt) -- (121pt,21pt);

    \draw (25.5pt,82.5pt) -- (28pt,82.5pt);
    \draw (24.5pt,79pt) -- (31pt,79pt);
    \draw (55.5pt,82.5pt) -- (58pt,82.5pt);
    \draw (54.5pt,79pt) -- (61pt,79pt);
    \draw (85.5pt,82.5pt) -- (88pt,82.5pt);
    \draw (84.5pt,79pt) -- (91pt,79pt);
    \draw (115.5pt,82.5pt) -- (118pt,82.5pt);
    \draw (114.5pt,79pt) -- (121pt,79pt);
    \end{scope}

    \begin{scope}[thick, arrows=-stealth]

    \draw (24pt,14pt) -- (41pt,36pt);
    \draw (23pt,17.5pt) -- (38pt,37pt);
    \draw (22pt,21pt) -- (36pt,39pt);

    \draw (54pt,14pt) -- (71pt,36pt);
    \draw (53pt,17.5pt) -- (68pt,37pt);
    \draw (52pt,21pt) -- (66pt,39pt);

    \draw (84pt,14pt) -- (101pt,36pt);
    \draw (83pt,17.5pt) -- (98pt,37pt);
    \draw (82pt,21pt) -- (96pt,39pt);

    \draw (114pt,14pt) -- (131pt,36pt);
    \draw (113pt,17.5pt) -- (128pt,37pt);
    \draw (112pt,21pt) -- (126pt,39pt);

    \draw (24pt,86pt) -- (41pt,64pt);
    \draw (23pt,82.5pt) -- (38pt,63pt);
    \draw (22pt,79pt) -- (36pt,61pt);

    \draw (54pt,86pt) -- (71pt,64pt);
    \draw (53pt,82.5pt) -- (68pt,63pt);
    \draw (52pt,79pt) -- (66pt,61pt);

    \draw (84pt,86pt) -- (101pt,64pt);
    \draw (83pt,82.5pt) -- (98pt,63pt);
    \draw (82pt,79pt) -- (96pt,61pt);

    \draw (114pt,86pt) -- (131pt,64pt);
    \draw (113pt,82.5pt) -- (128pt,63pt);
    \draw (112pt,79pt) -- (126pt,61pt);

    \draw (40pt,40pt) -- (67pt,57pt);
    \draw (70pt,40pt) -- (97pt,57pt);
    \draw (100pt,40pt) -- (127pt,57pt);
    \end{scope}

    \draw[white, fill=white] (54.5pt,48pt) rectangle (57.5pt,52pt);
    \draw[white, fill=white] (84.5pt,48pt) rectangle (87.5pt,52pt);
    \draw[white, fill=white] (114.5pt,48pt) rectangle (117.5pt,52pt);

    \begin{scope}[thick, arrows=-stealth]
    \draw (40pt,60pt) -- (67pt,43pt);
    \draw (70pt,60pt) -- (97pt,43pt);
    \draw (100pt,60pt) -- (127pt,43pt);
    \end{scope}
    
    \begin{scope}[thick]
    \draw (130pt,40pt) -- (140pt,46pt);
    \draw (130pt,60pt) -- (140pt,54pt);
    \end{scope}

    \draw[black, fill=white] (0pt,0pt) circle (1.0ex);
    \node[anchor=center] at (0pt,11pt) {\large $...$};
    \draw[black, fill=white] (0pt,23pt) circle (1.0ex);
    \draw[black, fill=white] (0pt,35pt) circle (1.0ex);

    \draw[black, fill=white] (0pt,65pt) circle (1.0ex);
    \node[anchor=center] at (0pt,76pt) {\large $...$};
    \draw[black, fill=white] (0pt,88pt) circle (1.0ex);
    \draw[black, fill=white] (0pt,100pt) circle (1.0ex);

    \draw[black, fill=white] (40pt,40pt) circle (1.0ex);
    \draw[black, fill=white] (40pt,60pt) circle (1.0ex);
    \draw[black, fill=white] (70pt,40pt) circle (1.0ex);
    \draw[black, fill=white] (70pt,60pt) circle (1.0ex);
    \draw[black, fill=white] (100pt,40pt) circle (1.0ex);
    \draw[black, fill=white] (100pt,60pt) circle (1.0ex);
    \draw[black, fill=white] (130pt,40pt) circle (1.0ex);
    \draw[black, fill=white] (130pt,60pt) circle (1.0ex);

    \begin{scope}[very thick, gray]
    \draw (-6pt,-5pt) -- (-9pt,-5pt) -- (-9pt,40pt) -- (-6pt,40pt);
    \draw (-9pt,17.5pt) -- (-12pt,17.5pt);

    \draw (-6pt,60pt) -- (-9pt,60pt) -- (-9pt,105pt) -- (-6pt,105pt);
    \draw (-9pt,82.5pt) -- (-12pt,82.5pt);
    \end{scope}

    \node[anchor=center, rotate=90] at (-19pt,17.5pt) {\small $d$ nodes};
    \node[anchor=center, rotate=90] at (-19pt,82.5pt) {\small $d$ nodes};

\end{tikzpicture}
    \caption{Example construction for Theorem~\ref{th:twostage} where the two-stage approach returns a schedule of significantly higher cost than the actual optimum.}
    \label{fig:zipper}
\end{figure}
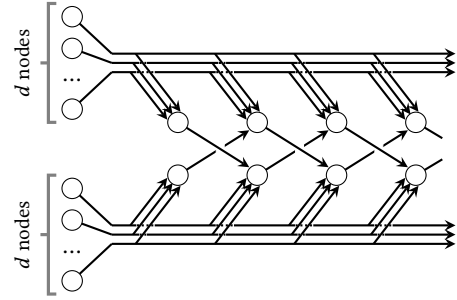

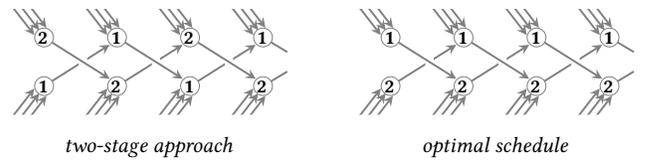
\begin{figure}[t]
    \centering
    \hspace{-0.03\textwidth}
        \begin{minipage}{0.22\textwidth}
            \resizebox{1.0\textwidth}{!}{\begin{tikzpicture}

    \begin{scope}[thick, arrows=-stealth, gray]

    \draw (24pt,14pt) -- (41pt,36pt);
    \draw (23pt,17.5pt) -- (38pt,37pt);
    \draw (22pt,21pt) -- (36pt,39pt);

    \draw (54pt,14pt) -- (71pt,36pt);
    \draw (53pt,17.5pt) -- (68pt,37pt);
    \draw (52pt,21pt) -- (66pt,39pt);

    \draw (84pt,14pt) -- (101pt,36pt);
    \draw (83pt,17.5pt) -- (98pt,37pt);
    \draw (82pt,21pt) -- (96pt,39pt);

    \draw (114pt,14pt) -- (131pt,36pt);
    \draw (113pt,17.5pt) -- (128pt,37pt);
    \draw (112pt,21pt) -- (126pt,39pt);

    \draw (24pt,86pt) -- (41pt,64pt);
    \draw (23pt,82.5pt) -- (38pt,63pt);
    \draw (22pt,79pt) -- (36pt,61pt);

    \draw (54pt,86pt) -- (71pt,64pt);
    \draw (53pt,82.5pt) -- (68pt,63pt);
    \draw (52pt,79pt) -- (66pt,61pt);

    \draw (84pt,86pt) -- (101pt,64pt);
    \draw (83pt,82.5pt) -- (98pt,63pt);
    \draw (82pt,79pt) -- (96pt,61pt);

    \draw (114pt,86pt) -- (131pt,64pt);
    \draw (113pt,82.5pt) -- (128pt,63pt);
    \draw (112pt,79pt) -- (126pt,61pt);

    \draw (40pt,40pt) -- (67pt,57pt);
    \draw (70pt,40pt) -- (97pt,57pt);
    \draw (100pt,40pt) -- (127pt,57pt);
    \end{scope}

    \draw[white, fill=white] (54.5pt,48pt) rectangle (57.5pt,52pt);
    \draw[white, fill=white] (84.5pt,48pt) rectangle (87.5pt,52pt);
    \draw[white, fill=white] (114.5pt,48pt) rectangle (117.5pt,52pt);

    \begin{scope}[thick, arrows=-stealth, gray]
    \draw (40pt,60pt) -- (67pt,43pt);
    \draw (70pt,60pt) -- (97pt,43pt);
    \draw (100pt,60pt) -- (127pt,43pt);
    \end{scope}
    
    \begin{scope}[thick, gray]
    \draw (130pt,40pt) -- (140pt,46pt);
    \draw (130pt,60pt) -- (140pt,54pt);
    \end{scope}

    \draw[gray, fill=white] (40pt,40pt) circle (1.0ex);
    \draw[gray, fill=white] (40pt,60pt) circle (1.0ex);
    \draw[gray, fill=white] (70pt,40pt) circle (1.0ex);
    \draw[gray, fill=white] (70pt,60pt) circle (1.0ex);
    \draw[gray, fill=white] (100pt,40pt) circle (1.0ex);
    \draw[gray, fill=white] (100pt,60pt) circle (1.0ex);
    \draw[gray, fill=white] (130pt,40pt) circle (1.0ex);
    \draw[gray, fill=white] (130pt,60pt) circle (1.0ex);

    \draw[white, fill=white] (22pt,28pt) rectangle (132pt,13pt);
    \draw[white, fill=white] (22pt,72pt) rectangle (132pt,87pt);

    \node[anchor=center] at (39.5pt,40pt) {\small \textbf{1}};
    \node[anchor=center] at (69.5pt,60pt) {\small \textbf{1}};
    \node[anchor=center] at (99.5pt,40pt) {\small \textbf{1}};
    \node[anchor=center] at (129.5pt,60pt) {\small \textbf{1}};

    \node[anchor=center] at (39.5pt,60pt) {\small \textbf{2}};
    \node[anchor=center] at (69.5pt,40pt) {\small \textbf{2}};
    \node[anchor=center] at (99.5pt,60pt) {\small \textbf{2}};
    \node[anchor=center] at (129.5pt,40pt) {\small \textbf{2}};

    \node[anchor=center] at (83pt,15pt) {\textit{two-stage approach}};

\end{tikzpicture}}
        \end{minipage}
        \hspace{0.03\textwidth}
        \begin{minipage}{0.22\textwidth}
            \resizebox{1.0\textwidth}{!}
            {\begin{tikzpicture}

    \begin{scope}[thick, arrows=-stealth, gray]

    \draw (24pt,14pt) -- (41pt,36pt);
    \draw (23pt,17.5pt) -- (38pt,37pt);
    \draw (22pt,21pt) -- (36pt,39pt);

    \draw (54pt,14pt) -- (71pt,36pt);
    \draw (53pt,17.5pt) -- (68pt,37pt);
    \draw (52pt,21pt) -- (66pt,39pt);

    \draw (84pt,14pt) -- (101pt,36pt);
    \draw (83pt,17.5pt) -- (98pt,37pt);
    \draw (82pt,21pt) -- (96pt,39pt);

    \draw (114pt,14pt) -- (131pt,36pt);
    \draw (113pt,17.5pt) -- (128pt,37pt);
    \draw (112pt,21pt) -- (126pt,39pt);

    \draw (24pt,86pt) -- (41pt,64pt);
    \draw (23pt,82.5pt) -- (38pt,63pt);
    \draw (22pt,79pt) -- (36pt,61pt);

    \draw (54pt,86pt) -- (71pt,64pt);
    \draw (53pt,82.5pt) -- (68pt,63pt);
    \draw (52pt,79pt) -- (66pt,61pt);

    \draw (84pt,86pt) -- (101pt,64pt);
    \draw (83pt,82.5pt) -- (98pt,63pt);
    \draw (82pt,79pt) -- (96pt,61pt);

    \draw (114pt,86pt) -- (131pt,64pt);
    \draw (113pt,82.5pt) -- (128pt,63pt);
    \draw (112pt,79pt) -- (126pt,61pt);

    \draw (40pt,40pt) -- (67pt,57pt);
    \draw (70pt,40pt) -- (97pt,57pt);
    \draw (100pt,40pt) -- (127pt,57pt);
    \end{scope}

    \draw[white, fill=white] (54.5pt,48pt) rectangle (57.5pt,52pt);
    \draw[white, fill=white] (84.5pt,48pt) rectangle (87.5pt,52pt);
    \draw[white, fill=white] (114.5pt,48pt) rectangle (117.5pt,52pt);

    \begin{scope}[thick, arrows=-stealth, gray]
    \draw (40pt,60pt) -- (67pt,43pt);
    \draw (70pt,60pt) -- (97pt,43pt);
    \draw (100pt,60pt) -- (127pt,43pt);
    \end{scope}
    
    \begin{scope}[thick, gray]
    \draw (130pt,40pt) -- (140pt,46pt);
    \draw (130pt,60pt) -- (140pt,54pt);
    \end{scope}

    \draw[gray, fill=white] (40pt,40pt) circle (1.0ex);
    \draw[gray, fill=white] (40pt,60pt) circle (1.0ex);
    \draw[gray, fill=white] (70pt,40pt) circle (1.0ex);
    \draw[gray, fill=white] (70pt,60pt) circle (1.0ex);
    \draw[gray, fill=white] (100pt,40pt) circle (1.0ex);
    \draw[gray, fill=white] (100pt,60pt) circle (1.0ex);
    \draw[gray, fill=white] (130pt,40pt) circle (1.0ex);
    \draw[gray, fill=white] (130pt,60pt) circle (1.0ex);

    \draw[white, fill=white] (22pt,28pt) rectangle (132pt,13pt);
    \draw[white, fill=white] (22pt,72pt) rectangle (132pt,87pt);

    \node[anchor=center] at (39.5pt,60pt) {\small \textbf{1}};
    \node[anchor=center] at (69.5pt,60pt) {\small \textbf{1}};
    \node[anchor=center] at (99.5pt,60pt) {\small \textbf{1}};
    \node[anchor=center] at (129.5pt,60pt) {\small \textbf{1}};

    \node[anchor=center] at (39.5pt,40pt) {\small \textbf{2}};
    \node[anchor=center] at (69.5pt,40pt) {\small \textbf{2}};
    \node[anchor=center] at (99.5pt,40pt) {\small \textbf{2}};
    \node[anchor=center] at (129.5pt,40pt) {\small \textbf{2}};

    \node[anchor=center] at (83pt,15pt) {\textit{optimal schedule}};

\end{tikzpicture}}
        \end{minipage}
        \hspace{-0.02\textwidth}
    \caption{Illustration of the assignments of chain nodes to the two processors with the two-stage approach and the optimum schedule, in the construction of Figure~\ref{fig:zipper}.}
    \label{fig:zipper_assign}
\end{figure}

This means that altogether, the cost ratio between the two-stage approach and the optimum is
\[ \frac{m + (d \cdot m )\cdot g}{m + (2 \cdot m +d)\cdot g} \, . \]
We can lower bound the numerator by $d \! \cdot \! m \! \cdot \! g$, and upper bound the denominator by $4 \! \cdot \! m \! \cdot \! g$ if we have $m \! \geq \! d$ and $g \! \geq \! 1$, thus obtaining a ratio of at least $\frac{d}{4}$. With $\Theta(d)=\Theta(n)$ in the construction, this is indeed a linear factor difference.
\end{proof}

We note that the construction above has uniform node weights and $L=0$, so the result also carries over to the simpler model of multiprocessor red-blue pebbling~\cite{MPP_opdas}. The example is also easy to generalize to any constant number of processors $P \geq 3$. 

\section{Further Fundamental Properties} \label{sec:moretheory}

\subsection{Computational Complexity}

One natural question about our scheduling problem is its complexity. Unsurprisingly, finding the optimum is very challenging: since MBSP generalizes multi-processor pebbling, it follows from~\cite{MPP_opdas} that it is already NP-hard on simple classes of DAGs, and does not allow a polynomial-time approximation scheme.

As for the parts of the two-stage approach, the complexity of the scheduling stage has been thoroughly studied, and it is also known to be NP-hard~\cite{BSP_DAG_opdas}. However, surprisingly, we are not aware that the complexity of the memory management stage has been analyzed before. More specifically, assume that the compute steps in the phases $\Psi_{comp}$ of our MBSP schedule are already fixed for each processor and superstep, but we are still free to choose all the save, delete and load operations throughout the schedule. Our goal is to minimize the total (synchronous) cost of the $\Psi_{save}$ and $\Psi_{load}$ phases over the whole schedule.

Note that in the uniform-weight case when $\omega(v) = 1$ for all $v$, the best strategy is simply that whenever we need to evict a value from cache, we select the value which is not needed for the longest time (known as B\'el\'ady's optimal algorithm or Clairvoyant algorithm). This can easily be computed in polynomial time. However, with general node weights, the problem immediately becomes NP-hard.

\begin{lemma} \label{lem:weakNP}
With $P=1$, the memory management problem is weakly NP-hard.
\end{lemma}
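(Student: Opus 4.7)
The plan is to reduce from the weakly NP-complete PARTITION problem. Given integers $a_1, \ldots, a_k$ with $\sum_i a_i = 2B$, I will construct an instance of the memory-management problem with $P=1$, $g=1$, and cache capacity $r = 2B$ on the following DAG: $k+1$ source nodes $s_1, \ldots, s_k, s^*$ with memory weights $\mu(s_i) = a_i$ and $\mu(s^*) = B$, together with $2k+1$ sink nodes $u_1, \ldots, u_k, u^*, v_1, \ldots, v_k$, each of memory weight $0$, where $u_i$ and $v_i$ have $s_i$ as their only parent and $u^*$ has $s^*$ as its only parent. The fixed compute order is $u_1, \ldots, u_k, u^*, v_1, \ldots, v_k$, placed in separate supersteps so that I/O operations can be inserted between any two consecutive computes.

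The key observation I will rely on is that the I/O cost of any schedule collapses to the LOAD cost of the sources: the sinks have $\mu = 0$ (so SAVE and LOAD on them are free), and the sources already carry blue pebbles from the initial configuration (so SAVE on them is redundant). Each $s_i$ must hold a red pebble at both $u_i$ and $v_i$, and $s^*$ must hold a red pebble at $u^*$. Let $K \subseteq [k]$ denote the indices for which $s_i$ stays in cache throughout $u^*$; then each $s_i$ with $i \notin K$ must be loaded twice, giving an optimal load cost of $\sum_{i \in K} a_i + 2\sum_{i \notin K} a_i + B = 5B - \sum_{i \in K} a_i$. The memory bound at $u^*$ forces $B + \sum_{i \in K} a_i \leq 2B$, i.e.\ $\sum_{i \in K} a_i \leq B$, so the optimum equals exactly $4B$ precisely when some $K$ attains $\sum_{i \in K} a_i = B$; this is exactly the PARTITION condition, and since the construction has size polynomial in the binary encoding of the input, weak NP-hardness follows.

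The main obstacle will be the lower-bound direction, namely ruling out any clever schedule that beats the above estimate. I plan to argue carefully that (i) every $s_i$ must be loaded at least once before $u_i$ and again before $v_i$ whenever it is not retained in cache in between, since the compute rule requires a red pebble on every parent; (ii) SAVE steps on sources are never beneficial, while steps on sinks cost nothing; (iii) retaining anything other than sources in cache at the time of $u^*$ only tightens the memory constraint without reducing cost; and (iv) no rearrangement of loads, saves, or deletes within or across supersteps can circumvent the critical inequality $B + \sum_{i \in K} a_i \leq 2B$ at the moment of $u^*$. Together these observations yield the matching lower bound $5B - \sum_{i \in K} a_i$ and complete the reduction.
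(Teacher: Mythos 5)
Your proposal is correct and follows essentially the same route as the paper: a reduction from PARTITION in which the $a_i$-weighted sources are needed in cache, then a half-total-weight source, then the $a_i$-weighted sources again, with the cache capacity equal to the total weight so that exactly a half-sum subset can be retained. The only cosmetic difference is that you use one weight-$0$ sink per source per phase instead of a single sink with all sources as parents, which changes nothing in the counting $5B-\sum_{i\in K}a_i$ subject to $\sum_{i\in K}a_i\leq B$.
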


\begin{lemma} \label{lem:strongNP}
With $P \geq 2$, the memory management problem is strongly NP-hard.
\end{lemma}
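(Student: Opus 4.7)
The plan is to reduce from 3-PARTITION, which is strongly NP-hard: given items $a_1, \ldots, a_{3n}$ with each $a_i \in (B/4, B/2)$ and $\sum_i a_i = nB$, decide whether they can be split into $n$ triples each summing to exactly $B$. The leverage over the single-processor case is the $\max_p$ in the synchronous cost: with $P=2$, balancing per-superstep loads across the two processors becomes a packing problem.

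Given such an instance, I would construct an MBSP instance with $P = 2$, $g = 1$, $L = 0$, and cache capacity $r = nB + 2B$. The DAG contains $3n$ source nodes $u_1, \ldots, u_{3n}$ with memory weights $\mu(u_i) = a_i$ (all initially with blue pebbles). For each $k \in [n]$, a small gadget forces processor~1 to load exactly $B$ units in superstep $k$: introduce a zero-weight source $s_k$, a node $r_k$ with parent $s_k$ and $\mu(r_k) = B$, and a tiny node $w_k$ with parent $r_k$. The fixed compute schedule has $r_k$ computed on processor~2 in superstep $k$, and $w_k$ computed on processor~1 in superstep $k+1$. Since $r_k$ is not a source and does not exist in slow memory before superstep $k$, yet must be on processor~1's cache by superstep $k+1$, both the save of $r_k$ on processor~2 and the load of $r_k$ on processor~1 are forced into superstep $k$. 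Finally, in superstep $n+1$, a consumer node $c$ on processor~2 takes all items $u_i$ as parents, and the cache capacity $r$ is chosen so processor~2 can retain every item it loads.

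Since every compute, save, and synchronization cost is a fixed constant determined by the schedule, the only flexible quantity is the distribution of item loads on processor~2 across supersteps $1, \ldots, n$. Let $T_k$ denote the total size of items loaded on processor~2 during superstep $k$, so $\sum_k T_k = nB$. With processor~1 contributing a forced load of $B$ per superstep, the load phase of superstep $k$ costs $\max(B, T_k) \cdot g$, so the total load cost is
\begin{equation*}
\sum_{k=1}^{n} \max(B, T_k) \cdot g \;=\; nB \cdot g + \sum_{k=1}^{n} \max(0, T_k - B) \cdot g,
\end{equation*}
which attains its minimum $nBg$ exactly when every $T_k \leq B$. Distributing $3n$ items with sizes in $(B/4, B/2)$ summing to $nB$ into $n$ groups of total size at most $B$ is possible iff every group holds exactly three items summing to exactly $B$, i.e., iff the 3-PARTITION instance is a yes-instance. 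Since all constructed weights are polynomial in the (unary) 3-PARTITION input, strong NP-hardness follows.

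The main obstacle is to rule out clever memory-management tricks that might circumvent this forced structure, e.g., delaying the save of $r_k$, shifting item loads across supersteps by exploiting delete operations, or reusing red pebbles across phases to reduce a particular $T_k$. The fixed compute schedule pins down where and when each $r_k$ and $w_k$ lives: blue pebbles of $r_k$ cannot exist before superstep $k$, and processor~1's compute in superstep $k+1$ forces the load by superstep $k$, so both the save on processor~2 and the load on processor~1 are confined to superstep $k$. A straightforward case analysis of the remaining save, load, and delete choices then confirms that neither the per-superstep load bound $\max(B, T_k) \cdot g$ nor the total $\sum_k T_k = nB$ can be improved, completing the reduction.
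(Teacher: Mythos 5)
Your reduction is correct and follows essentially the same strategy as the paper's proof: reduce from 3-partition, force one processor to load exactly $B$ units of data in each of the first $n$ communication phases, and observe that the synchronous $\max_p$ over load costs then makes the total load cost $nBg$ achievable iff the items of total size $nB$ can be spread over the $n$ load phases with at most $B$ per phase. The only substantive difference is the gadget enforcing the fixed per-phase quota. The paper pins a single large node $w$ of weight $\alpha(m-1)$ in the cache of the quota processor and sets $r=\alpha m$, so that only one weight-$\alpha$ source fits alongside it at a time; the forced loads are thus capacity-driven. You instead use a cross-processor handoff: $r_k$ is computed on processor~2 in superstep $k$ and consumed on processor~1 in superstep $k+1$, so its blue pebble cannot exist before superstep $k$'s save phase and its red pebble on processor~1 must exist by superstep $k$'s load phase, confining both the save and the load to that single communication phase. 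Your gadget forces the timing more directly (the value simply does not exist in slow memory earlier), at the cost of also introducing forced save traffic of weight $B$ per superstep, which is harmless since save and load phases are charged separately in the synchronous cost. Both constructions are valid; the argument that $\sum_k \max(B,T_k)=nBg+\sum_k\max(0,T_k-B)g$ is minimized exactly on yes-instances is the same in both.
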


\begin{proof}
The lemmas can be shown with relatively standard reductions techniques from the number partitioning and $3$-partitioning problems, respectively.

On a high level, for Lemma~\ref{lem:weakNP}, given integers $\{a_1, ..., a_m\}$ that sum up to $\alpha$, we can create nodes $v_1, ..., v_m$ and $v'$ with memory weights $a_1, ..., a_m$ and $\frac{\alpha}{2}$, respectively, and ensure that first the nodes $v_1, ..., v_m$ are needed in cache, then only $v'$, and then $v_1, ..., v_m$ again. If there is a subset of $\{a_1, ..., a_m\}$ that sums up to exactly $\frac{\alpha}{2}$, then we can leave these nodes in cache when loading $v'$, and only reload nodes of total weight $\frac{\alpha}{2}$ afterwards. Otherwise, the total loading cost for the third computation is larger than $\frac{\alpha}{2}$.

For Lemma~\ref{lem:strongNP}, given integers $\{a_1, ..., a_{3\!\cdot\!m}\}$ that sum up to $\alpha \cdot  m$, we create nodes with memory weights $a_1, ..., a_{3 \cdot m}$, respectively, which are all needed in cache on processor $1$ by superstep $(m+1)$. We ensure that processor $2$ is always loading a weight of $\alpha$ in the first $m$ communication phases, and hence we can only avoid increasing the total costs if we can partition our weights into $m$ distinct subsets that each sum up to exactly $\alpha$.
\end{proof}

\subsection{Synchronous vs. Asynchronous Optimum}

Another interesting aspect is how the synchronous and asynchronous cost functions relate to each other. The two costs can only be fairly compared when $L=0$, i.e.\ without synchronization costs. In this case, for any given schedule, the asynchronous cost is always at most as high as the synchronous cost.

One natural question here is how different the two optimal solutions can be from each other: if we find the optimum for synchronous cost, then in terms of asynchronous cost, how far away can this be from the optimum, and vice versa?

We show in two simple examples that this difference is indeed notable. We first begin with the case when we optimize for asynchronous cost, and evaluate this with respect to synchronous cost.

\begin{lemma} \label{lem:async1}
Consider an optimal solution for asynchronous cost. For synchronous cost, this can be a $\frac{P}{2}_{\!}-_{\!}\varepsilon$ factor away from the optimum, for any $\varepsilon\!>\!0$.
\end{lemma}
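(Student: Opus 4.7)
The plan is to exhibit a very simple DAG where some asynchronous-optimal schedule can use many sparsely populated supersteps: the asynchronous cost stays at its minimum, but each ``half-empty'' superstep is still charged in full synchronously, inflating the synchronous cost by a factor of $P$.

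Concretely, I would take the DAG consisting of $P$ disjoint edges $v_i \to u_i$, with each $v_i$ a source initially in $B$ and each $u_i$ a sink required in $B$ at termination; set $\omega(u_i)=1$, $\mu(v_i)=\mu(u_i)=1$, $L=0$, any fixed $g>0$, and a cache capacity $r \geq 2$ on every processor. Since the $\Psi_{load}$ subphase sits at the end of a superstep, the LOAD of each $v_i$ must precede the COMPUTE of $u_i$ in a strictly earlier superstep on the same processor, so every valid schedule uses at least two supersteps. Matching this, the schedule $S_1$ that has processor $i$ load $v_i$ in superstep $1$ and then compute and save $u_i$ in superstep $2$ attains synchronous cost $g + (1+g) = 1 + 2g$, which is the synchronous optimum. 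The asynchronous optimum equals $1 + 2g$ as well, since every $u_i$ forces a serial LOAD-COMPUTE-SAVE chain of cost $1 + 2g$ on its assigned processor, and $S_1$ already achieves this.

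Next I would consider the staggered schedule $S_2$ in which processor $i$ loads $v_i$ in superstep $i$ and computes plus saves $u_i$ in superstep $i+1$, with every other processor idle in those two supersteps. Each processor's private transition sequence is still LOAD-COMPUTE-SAVE of total cost $1 + 2g$, so $S_2$ is asynchronous-optimal. However, each of the $P-1$ interior supersteps has one processor computing and saving while a different processor loads, so it contributes $\max \Psi_{comp} + \max \Psi_{save} + \max \Psi_{load} = 1 + 2g$ to synchronous cost; combined with the boundary supersteps this totals $P(1 + 2g)$, i.e.\ exactly a factor of $P$ above the synchronous optimum. Since $P > \tfrac{P}{2} - \varepsilon$ for every $\varepsilon > 0$, the lemma follows. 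The only delicate part of the argument is the pair of optimality statements (synchronous and asynchronous), both of which reduce to the observation that each $u_i$ must undergo its own LOAD, COMPUTE and SAVE somewhere in the schedule, with the LOAD and COMPUTE necessarily in distinct supersteps.
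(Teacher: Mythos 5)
Your proof is correct, and it takes a genuinely different route from the paper's. The paper's construction lives entirely on the computation side ($g=0$, $r=\infty$): it pairs up the $P$ processors, gives each pair a chain of length $P/2$ containing exactly one heavy node of weight $Z$, and places that heavy node at a different chain position for each pair; the misaligned async-optimal schedule then has a weight-$Z$ node in every one of the $P/2$ supersteps, giving synchronous cost about $(P/2)\cdot Z$ against an aligned alternative of cost about $Z$. You instead exploit the communication side together with the phase ordering inside a superstep (loads come last, so a load and the compute that consumes it must lie in different supersteps), staggering $P$ independent load--compute--save chains across $P+1$ supersteps. Your factor of exactly $P$ is strictly stronger than the stated $\frac{P}{2}-\varepsilon$, and it is in fact tight: for $L=0$ the synchronous cost of any schedule is at most $\sum_j\sum_p\bigl(cost(\Psi^{(j)}_{p,comp})+cost(\Psi^{(j)}_{p,save})+cost(\Psi^{(j)}_{p,load})\bigr) \le P$ times its asynchronous cost, so no construction can do better. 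Both arguments rely on the same existential reading of the lemma --- in each construction there is also an async-optimal schedule that is synchronously (near-)optimal, so the claim is that \emph{some} async-optimal schedule is bad synchronously --- and both are sound under that reading. The one point worth stating explicitly is that you take $\Gamma(v_i)=0$ for the source nodes, which are never saved and hence not literally covered by the paper's definition of $\Gamma$; this is the only sensible convention, since sources reside in slow memory from the initial configuration, but it deserves a sentence.
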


In the reverse case, showing such a high factor difference is more challenging. Instead, we present a simpler example that already demonstrates a $1.33$ factor.

\begin{lemma} \label{lem:async2}
Consider an optimal solution for synchronous cost. For asynchronous cost, this can be a $\frac{4}{3}_{\!}-_{\!}\varepsilon$ factor away from the optimum, for any $\varepsilon\!>\!0$.
\end{lemma}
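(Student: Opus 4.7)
My plan is to exhibit a small DAG instance where the synchronous and asynchronous optima favor structurally different schedules, driving the ratio arbitrarily close to $\frac{4}{3}$. The intuition is the classical $\sum_i\max_j\geq\max_j\sum_i$ gap: a schedule forced into two compute supersteps whose per-processor loads are $(2,1)$ and $(1,2)$ has synchronous cost $4$ but asynchronous cost $\max(2+1,\,1+2)=3$. The construction ensures the sync optimum must embrace exactly this sort of imbalance while the async optimum can escape it.

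For the construction I would take $P=2$, $L=0$, a sufficiently large $r$, a small $g$, and a DAG on seven nodes: three sources $s_A, s_B, s_D$ and four computed nodes with edges $s_A\!\to\! A$, $s_B\!\to\! B\!\to\! C$, $s_D\!\to\! D$; compute weights $\omega(A)=\omega(D)=2$ and $\omega(B)=\omega(C)=1$; uniform memory weight $\mu$; and sinks $A, C, D$. Setting $\delta:=\mu\cdot g$ and letting $\delta$ tend to $0$, I would then bound the two optima separately. For the async optimum, the split schedule that assigns $\{A,C\}$ to $P_1$ and $\{B,D\}$ to $P_2$ achieves async cost $3+O(\delta)$: $P_2$ finishes saving $B$ well before $P_1$ finishes computing the heavier $A$, so when $P_1$ subsequently loads $B$ it incurs no wait. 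Combined with the total-work/$P=3$ lower bound, this pins the async optimum at $3+O(\delta)$.

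For the sync optimum, the serial schedule that puts only $A$ on $P_1$ and all of $B,C,D$ on $P_2$ needs no cross-processor communication and achieves sync cost $4+4\delta$ (one superstep for source loads, one for computing and saving sinks). For the matching lower bound I would argue by cases on $W:=\max(W_1,W_2)$, the maximum total compute work on either processor. If $W\geq 4$, the identity $\sum_{ss}\max_p cost(\Psi_{comp})\geq\max_p\sum_{ss}cost(\Psi_{comp})\geq 4$ already settles it. Otherwise $W<4$ forces $W_1=W_2=3$ (even allowing recomputation), which necessarily places $B$ and $C$ on different processors; a short enumeration of the few superstep layouts then shows the unavoidable save-and-load pair for $B$ pushes the sync cost to at least $4+5\delta>4+4\delta$. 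Hence every sync-optimal schedule satisfies $W\geq 4$, and so also has asynchronous cost $\geq 4$, since each processor's async timeline is at least its total compute work.

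Putting the pieces together, the ratio is $\frac{4+O(\delta)}{3+O(\delta)}$, which tends to $\frac{4}{3}$ from below as $\delta\to 0$ and realizes any $\frac{4}{3}-\varepsilon$ for small enough $\delta$. The main obstacle I expect is the $W<4$ branch of the sync lower bound: one has to verify carefully that every balanced (and hence async-friendly) schedule incurs communication overhead strictly exceeding the $4\delta$ already built into the serial schedule, so that no sync-optimal schedule can accidentally also be async-optimal.
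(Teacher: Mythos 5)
Your construction is sound and genuinely different from the paper's, but the key step of your lower-bound argument is false as stated. You claim that any schedule with $W_1=W_2=3$ (which forces $B$ and $C$ onto different processors) must pay at least $4+5\delta$ in synchronous cost. It does not: consider the split $\{A,C\}$ on $P_1$, $\{B,D\}$ on $P_2$ with the pipelined superstep layout --- superstep 1: $P_1$ loads $s_A$, $P_2$ loads $s_B$ (cost $\delta$); superstep 2: $P_1$ computes $A$, $P_2$ computes $B$, both save their node, then $P_1$ loads $B$ while $P_2$ loads $s_D$ (cost $2+2\delta$); superstep 3: $P_1$ computes $C$, $P_2$ computes $D$, both save (cost $2+\delta$). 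The total is exactly $4+4\delta$: the extra save and load of $B$ hide under the $\max_p$ because they are paired with $P_1$'s save of $A$ and $P_2$'s load of $s_D$. So this balanced schedule is also sync-optimal, and its asynchronous cost is $3+4\delta$; consequently your assertion that \emph{every} sync-optimal schedule has $W\geq 4$ (and hence async cost $\geq 4$) is wrong. The lemma survives because it is existential --- it suffices that \emph{some} sync-optimal schedule is bad asynchronously, and your serial schedule $\{A\}\,|\,\{B,C,D\}$ is such a schedule --- but then the burden shifts to certifying that the serial schedule really attains the sync optimum, i.e.\ that \emph{no} schedule beats $4+4\delta$. Your case analysis does not deliver this either: in the $W\geq 4$ branch you only conclude a bound of $4$, not $4+4\delta$. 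The missing piece is an I/O accounting valid in all branches, e.g.: unless one processor computes all of $A,C,D$ (forcing $W\geq 5$), some processor computes two of the three sinks and therefore performs at least two distinct loads and two distinct saves, giving $\sum_{ss}\max_p(\cdot)\geq 2\delta$ for each of the load and save terms; combined with the compute bound of $4$ (which your $B$-versus-$C$ separation argument does correctly establish for the $W=3$ case) this yields the required $4+4\delta$.

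For comparison, the paper takes a different route that sidesteps all of this $\delta$-bookkeeping: it sets $g=0$ (so I/O is free) and uses $P=5$ with weights $Z-1$ and $2Z$, where the synchronous cost function forces the heavy node $v_1$ into the same superstep as the other heavy nodes $u_3,u_4$, pushing its light children into a third superstep; a sync-optimal schedule that happens to place $w$, $v_1$, and one child on the same processor then has asynchronous cost $4Z-2$ against an asynchronous optimum of $3Z-1$. Your mechanism (a work-imbalanced sync-optimal schedule versus a balanced async-optimal one) is arguably more elementary and uses only $P=2$, but it is exactly the nonzero $g$ that creates the fragile tie at $4+4\delta$ you mishandled; if you repair the lower bound as above, your proof goes through.
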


On a high level, Lemma~\ref{lem:async1} uses a construction where we sort our $P$ processors into $\frac{P}{2}$ pairs, and each pair of processors has a very costly computation in a different one of $\frac{P}{2}$ supersteps. The optimal asynchronous solution does not care about aligning the supersteps for the different processor pairs, since they do not affect each other. However, with synchronization, the same solution incurs a high cost in each of the $\frac{P}{2}$ supersteps, and it is much better instead to place the costly computations all in the same superstep.

In Lemma~\ref{lem:async2}, we use a simpler construction with only a few nodes, where the synchronous model motivates us to place two large computations into the same superstep in order to only have one superstep of high cost; however, this results in an unnecessary increase with respect to asynchronous cost.

\section{An ILP-based solution} \label{sec:ilp}

One promising approach to solve a problem as complex as MBSP scheduling is to formulate it as an Integer Linear Program (ILP), and apply a modern ILP solver to find a solution. Naturally, the ILP representation of the problem requires a very high number of variables, and as such, this is only viable on moderate-sized DAGs. However, even on these smaller DAGs, the returned schedules can give us vital insights into the properties of the problem. In this section, we outline this ILP-based technique and discuss some fundamental optimizations on it.

\subsection{ILP representation} \label{sec:full_ILP}

The key binary variables in the natural ILP representation indicate the main aspects of a pebbling sequence. In particular, for each node $v$, processor $p$ and time step $t$, we introduce variables $\textsc{compute}_{p,v,t}$, $\textsc{save}_{p,v,t}$ and $\textsc{load}_{p,v,t}$ to indicate whether node $v$ is computed on $p$, saved to RAM from $p$, or loaded into the cache of $p$, respectively, in time step $t$. Besides this, we also add variables $\textsc{hasred}_{p,v,t}$ and $\textsc{hasblue}_{v,t}$ to indicate whether node $v$ contains a red pebble of processor $p$ or a blue pebble, respectively, in the beginning of time step $t$. The deletion operations are not directly represented as steps in this ILP; instead, they are captured implicitly when we have $\textsc{hasred}_{p,v,t}=1$, but $\textsc{hasred}_{p,v,(t+1)}=0$.

Given these variables, the fundamental properties of the schedule can be expressed naturally with linear constraints. The fundamental constraints are summarized in Figure~\ref{fig:ilp}. For instance, the prerequisites of specific operations are ensured via the constraints in lines (1)--(3). Constraints (4)--(5) define how we can obtain a new pebble on a node. Constraint (6) ensures that a processor only executes a single operation in each step. The memory bound is enforced via line (7). Finally, constraints (8)--(10) enforce the appropriate initial and terminal state.

\begin{figure*}[t]
    \centering
    \begin{tikzpicture}

    \node[anchor=center] at (-69.5pt,172pt) { $\textsc{load}_{p,v,t} \: \leq \: \textsc{hasblue}_{v,t}$};
     \node[anchor=west] at (155pt,172pt) { $\forall$ $v$, $p$, $t$};
    
    \node[anchor=center] at (-68pt,154pt) { $\textsc{save}_{p,v,t} \: \leq \: \textsc{hasred}_{p,v,t}$};
     \node[anchor=west] at (155pt,154pt) { $\forall$ $v$, $p$, $t$};

    \node[anchor=center] at (-76pt,136pt) { $\textsc{compute}_{p,v,t} \: \leq \: \textsc{hasred}_{p,u,t}$};
     \node[anchor=west] at (155pt,136pt) { $\forall$ $p$, $t$, $\forall$  $(u, v) \! \in \! E$};
    
    \node[anchor=center] at (0pt,118pt) { $\textsc{hasred}_{p,v,t} \: \leq \: \textsc{hasred}_{p,v,(t-1)} + \textsc{compute}_{p,v,(t-1)} + \textsc{load}_{p,v,(t-1)}$};
     \node[anchor=west] at (155pt,118pt) { $\forall$ $v$, $p$, $\forall$ $t \geq 1$};

     \node[anchor=center] at (-30pt,100pt) { $\textsc{hasblue}_{v,t} \: \leq \: \textsc{hasblue}_{v,(t-1)} + \sum_p \, \textsc{save}_{p,v,(t-1)}$};
     \node[anchor=west] at (155pt,100pt) { $\forall$ $v$, $p$, $\forall$ $t \geq 1$};

    \node[anchor=center] at (-2pt,82pt) { $1 \: \geq \: \sum_{v} \, \left( \textsc{compute}_{p,v,t} + \textsc{save}_{p,v,t} + \textsc{load}_{p,v,t}\right)$};
     \node[anchor=west] at (155pt,82pt) { $\forall$ $p$, $t$};

    \node[anchor=center] at (-34pt,64pt) { $ r \: \geq \: \sum_v \, \mu(v) \cdot \textsc{hasred}_{p,v,t}$};
     \node[anchor=west] at (155pt,64pt) { $\forall$ $p$, $t$};

    \node[anchor=center] at (-92pt,46pt) { $\textsc{hasred}_{p,v,0} \: = \: 0$};
     \node[anchor=west] at (155pt,46pt) { $\forall$ $v$, $p$};

     \node[anchor=center] at (-40pt,28pt) { $\textsc{hasblue}_{v,0} \: = \: 1$ if $v$ is a source, $0$ otherwise};
     \node[anchor=west] at (155pt,28pt) { $\forall$ $v$};

     \node[anchor=center] at (-97pt,10pt) { $\sum_t \, \textsc{hasblue}_{v,t} \: = \: 1$};
     \node[anchor=west] at (155pt,10pt) { $\forall$ sink node $v$};

     \node[anchor=center] at (-190pt,10pt) { $(10)$};
     \node[anchor=center] at (-190pt,28pt) { $(9)$};
     \node[anchor=center] at (-190pt,46pt) { $(8)$};
     \node[anchor=center] at (-190pt,64pt) { $(7)$};
     \node[anchor=center] at (-190pt,82pt) { $(6)$};
     \node[anchor=center] at (-190pt,100pt) { $(5)$};
     \node[anchor=center] at (-190pt,118pt) { $(4)$};
     \node[anchor=center] at (-190pt,136pt) { $(3)$};
    \node[anchor=center] at (-190pt,154pt) { $(2)$};
    \node[anchor=center] at (-190pt,172pt) { $(1)$};

\end{tikzpicture}
    \caption{Fundamental linear constraints in the ILP formulation of MBSP scheduling.}
    \label{fig:ilp}
\end{figure*}

The ingredients of the cost functions can also be expressed with further auxiliary variables and linear constraints, but this is more technical. The asynchronous cost function is the simpler case: here we need continuous variables $\textsc{finishtime}_{p,t}$ and $\textsc{getsblue}_{v}$ to fulfill the roles of the functions $\gamma$ and $\Gamma$, respectively. We need to enforce that $\textsc{finishtime}_{p,t} - \textsc{finishtime}_{p,(t-1)}$ is always larger than the cost of step $t$, which can be expressed simply as
\[ \sum_v \, \omega(v) \cdot \textsc{compute}_{p,v,t} + g \cdot \mu(v) \cdot (\textsc{save}_{p,v,t} + \textsc{load}_{p,v,t})  \, . \]
For the dependence on the slow memory, we also have
\begin{equation*}
 \textsc{getsblue}_{v} \geq \textsc{finishtime}_{p,t} - M \cdot (1 - \textsc{save}_{p,v,t}) \, ,
\end{equation*}
where $M$ is large synthetic parameter, so the constraint only has any effect when $\textsc{save}_{p,v,t} = 1$. This motivates the ILP solver to save each node only once; the finishing time of this step becomes a lower bound on $\textsc{getsblue}_{v}$. Similarly, the constraint
\begin{equation*}
\textsc{finishtime}_{p,t} \geq \textsc{getsblue}_{v} + g \cdot \mu(v) - M \cdot (1 - \textsc{load}_{p,v,t})
\end{equation*}
expresses that if $\textsc{load}_{p,v,t}=1$, then this operation can only start at $\textsc{getsblue}_{v}$, and hence it finishes at $\textsc{getsblue}_{v} + g \cdot \mu(v)$ at the earliest. Finally, a continuous variable $\textsc{makespan}$ is used with the constraints $\textsc{finishtime}_{p,t} \leq \textsc{makespan}$, and the objective function of the ILP is simply to minimize this $\textsc{makespan}$.

The synchronous model is significantly more complex, since we need to capture the separate supersteps in the schedule to express the cost function. On a high level, we first create binary variables $\textsc{compphase}_{t}$, $\textsc{savephase}_{t}$ and $\textsc{loadphase}_{t}$ to indicate the type of each time step $t$. The constraints ensure e.g.\ that $\textsc{compphase}_{t}=1$ if we have $\textsc{compute}_{p,v,t}=1$ for any $v$ and $p$, and we can also ensures with $\textsc{compphase}_{t} + \textsc{savephase}_{t} + \textsc{loadphase}_{t} \leq 1$ that the schedule is synchronous.

We then define binary variables $\textsc{compends}_{t}$ to identify the endpoints of each compute phase using $\textsc{compends}_{t} \leq \textsc{compphase}_{t}$ and $\textsc{compends}_{t} \geq \textsc{compphase}_{t} - \textsc{compphase}_{(t+1)}$. 
Then a continuous variable $\textsc{compuntil}_{p,t}$ is used to keep summing up the compute costs of subsequent steps on processors $p$, but we allow to reset this sum to $0$ right before the beginning of each compute phase (when $\textsc{commends}_{t} = 1$). Specifically, $\textsc{compuntil}_{p,t}$ is lower bounded by
\begin{equation*}
 \textsc{compuntil}_{p,(t-1)} + \sum_v \omega(v) \cdot \textsc{compute}_{p,v,t} - M \cdot \textsc{commends}_{t} \, .
\end{equation*}
Finally, we define a continuous variable $\textsc{compinduced}_{t}$ which takes the maximal value of these summed-up costs when we are in the last step of a compute phase, and can be set to $0$ otherwise:
\[
 \textsc{compinduced}_{t} \geq \textsc{compuntil}_{p,t} - M \cdot (1 - \textsc{compends}_{t}) \, . \]
The I/O costs can be expressed in a very similar way. The overall objective of the ILP problem is then to minimize
\[ \sum_t \, \textsc{compinduced}_{t} + \textsc{comminduced}_{t} + L \cdot \textsc{commends}_{t} \, . \]

For more details on the ILP representation, we refer the reader to the full version of the paper, or the source codes of the algorithms.

\subsection{The number of time steps} \label{sec:time}

Note that one crucial degree of freedom in this ILP representation is the upper bound $T$ on the number of time steps allowed. While a too small $T$ may exclude the optimal solution, an unnecessarily high $T$ creates too many variables in our ILP. In order to significantly reduce the number of time steps, we apply \emph{step merging}: we allow to combine multiple steps of our MBSP schedule into a single ILP time step. Indeed, for I/O operations, we can allow to save and/or load multiple values in the same step if all their prerequisites are satisfied simultaneously beforehand. Similarly, we can merge consecutive compute operations on a processor into a single step if all their inputs and outputs fit into cache simultaneously, even when there is a precedence relation between the nodes. Naturally, this step merging also requires to significantly adapt many of the constraints in our ILP accordingly.

The choice of $T$ also raises an interesting side question: if we happen to select $T$ too small and accidentally exclude the optimal solution, is there a simple way to notice this? For instance, if our ILP schedule contains an empty step with no operation, then one might assume that the solution is indeed optimal, since the solver decided not to use this step. We show that somewhat counter-intuitively, this is not the case.

\begin{lemma} \label{lem:ilp}
Assume that the optimal schedule of the ILP restricted to $T_0$ steps has cost $C_0$ and contains an empty step. It can happen that for some $T > T_0$, the optimal ILP schedule restricted to $T$ steps has a smaller cost $C < C_0$.
\end{lemma}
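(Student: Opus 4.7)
The plan is to prove the lemma by constructing a specific MBSP instance that exhibits the phenomenon. The counter-intuitive effect hinges on the fact that an empty step in a $T_0$-restricted optimum only witnesses that one particular schedule of cost $C_0$ does not need all $T_0$ slots; it does not rule out the existence of a cheaper schedule that requires strictly more than $T_0$ slots.

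Concretely, I would design a small DAG and parameter setting that admits two competing schedule skeletons. The first is a \emph{compact} schedule of cost $C_0$, whose ILP encoding uses only $K_0$ non-empty time steps because each processor packs many operations into few supersteps. The second is a \emph{refined} schedule of cost $C < C_0$ whose ILP encoding needs substantially more, $K_1 > K_0$, non-empty time steps because it fragments the work across additional supersteps, each of which brings its own compute/save/load triple of ILP steps. A natural engine for this gap is a tight cache bound coupled with a dependency pattern that rewards finer cross-processor exchange, in the same spirit as the zipper DAG of Theorem~\ref{th:twostage}: packing everything into few supersteps avoids many phase boundaries but forces wasteful reloads or a poor load balance, while the finer schedule pays for extra supersteps but amortises them by much cheaper phase costs.

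I would then choose $T_0$ to satisfy $K_0 \leq T_0 < K_1$ (e.g.\ $T_0 = K_0 + 1$). At this budget the compact schedule fits and necessarily leaves at least one slot empty, so the $T_0$-restricted optimum has cost $C_0$ and contains an empty step. Since $T_0 < K_1$, the refined schedule cannot be encoded, so we cannot do strictly better within $T_0$. Raising the budget to any $T \geq K_1$ then admits the refined schedule and drops the cost to $C$, yielding the claimed strict inequality.

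The main obstacle is showing that no alternative schedule with at most $T_0$ time steps beats cost $C_0$; i.e.\ ruling out every ``intermediate'' structure that might live between the compact and refined regimes. For a sufficiently small handcrafted instance this can be discharged by a finite case analysis over how the compute, save and load operations can be packed into $T_0$ slots, respecting the ILP's phase-ordering constraints, memory bound, and step-merging rules. The complete construction and case analysis are deferred to the appendix.
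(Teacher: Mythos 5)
Your high-level logic is sound and matches the skeleton of the paper's argument: an empty slot at budget $T_0$ only certifies that one particular cost-$C_0$ schedule is slack, not that no cheaper schedule exists at a larger budget, and the lemma follows once you exhibit a compact schedule with $K_0$ steps and a strictly cheaper refined schedule needing $K_1 \geq K_0+2$ steps, with $T_0$ chosen in between. However, the proposal stops exactly where the proof begins: the entire content of the lemma is the concrete instance realizing this gap, and you defer both the construction and the optimality analysis at budget $T_0$ to an unwritten appendix. As it stands there is no verifiable argument that such an instance exists, so this is a genuine gap rather than a complete proof.

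The mechanism you gesture at is also different from, and weaker than, the one the paper uses. You propose a multiprocessor trade-off where a finer superstep structure improves cross-processor exchange at the price of more phase boundaries; it is not shown (and not obvious, especially under step merging, which collapses many I/O operations into single ILP steps) that this yields independent control over the extra step count and the cost savings. The paper instead works with $P=1$ and uniform weights, using a zipper-like DAG with two feeder chains of length $d$ and cache bound $r=4$: the baseline schedule repeatedly reloads $u_d$ or $u'_d$ at cost $g$ each, and the cheaper alternative replaces one such load by recomputing the chain $u_1,\dots,u_d$, saving $g-d$ in cost but requiring $d-1$ additional, non-mergeable time steps (non-mergeable because the chain does not fit in cache). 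Choosing $g\geq d$ and $d\geq 3$ gives a $T_0$-optimum with an empty step (in fact up to $d-2$ empty steps) that is still beaten at budget $T_0+d-2$. If you want to complete your proof, adopting a single-processor recomputation-versus-I/O trade-off of this kind is the cleanest way to get the two tunable knobs (extra steps and cost delta) your plan requires, and it also makes the ``rule out all intermediate schedules'' step a genuinely finite and tractable case analysis.
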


This lemma already holds for $P=1$ and uniform node weights. On a high level, the proof uses a construction where some I/O steps in our schedule can be substituted by re-computing a chain of $d$ nodes instead. This reduces the total cost by $g-d$, but requires $(d-1)$ further steps (which also cannot be merged). Altogether, this means that if $g \geq d$ and if we select $T$ appropriately, then it can happen that the ILP schedule contains up to $(d-2)$ empty steps, but it is still suboptimal.

\subsection{Divide-and-Conquer method} \label{sec:dnq}

The number of variables in the above ILP representation scales with the number of nodes, processors and time steps, and thus even with very powerful modern ILP solvers, it already becomes untractable for DAGs with a few hundred nodes in practice. We also present a divide-and conquer method to extend the ILP-based approach to slightly larger DAGs. This consists of the following steps:

\begin{enumerate}[topsep=3pt, leftmargin=1.6em, itemsep=4pt, label=\textbf{\arabic*)}]
 \item First, our goal is to partition the input DAG into smaller parts such that the corresponding quotient graph remains acyclic. This allows us to develop a schedule for each part independently according to a topological order of the parts. We also want to have as few edges as possible between the separate parts; intuitively, this helps to ensure that the subproblems can indeed be solved independently without a major negative effect on the combined schedule. This acyclic partitioning problem can also be expressed as an ILP, like many graph partitioning problems before~\cite{jenneskens22}. Moreover, the size of this ILP problem is drastically smaller than that of a scheduling problem on the same DAG, since we now do not have a time dimension in our variables. Indeed, our experiments also showed that for acyclic partitioning problems with only two parts, the ILP solver almost always found the optimum solution in negligible time. Hence as our first step, we use this ILP-based acyclic bipartitioning method to recursively split the DAG into smaller subDAGs, until each of our subDAGs consist of at most $60$ nodes.
 \item We then develop a high-level `scheduling plan' for the partitioned DAG by considering the quotient graph, and using an adjusted version of the BSPg scheduling heuristic~\cite{BSP_algos_opdas} that allows to assign multiple processors to a specific node to reduce its computation time proportionally. This gives us a high-level schedule on the quotient DAG, defining which set of processors will be used for each subproblem. For DAGs that are close to sequential, this may simply mean that we consider the parts in topological order, and use all available processors in each subproblem. However, when the subDAGs are more parallel to each other, the available processors may be split between the given subproblems.
 \item We then consider the subproblems (i.e., subDAGs and subsets of processors) in a topological order, and use our ILP-based scheduler to find a good schedule for each of these subproblems. Note that when solving these MBSP subproblems, we also require further modifications to the ILP representation above: for instance, some nodes might already have a red pebble in the beginning (leftover from the previous subschedule), and non-sink nodes may also require a blue pebble by the end of the schedule (if they have children in a following subDAG). Note that in each subproblem, we always allow to use the entire cache capacity of the given processors.
 \item Finally, the subDAG schedules are concatenated into an MBSP schedule for the whole problem. A few more technical steps are executed to streamline this combined schedule, and to remove some unnecessary steps that were created due to the split.
\end{enumerate}

Note that this divide-and-conquer ILP is more of a heuristic approach to the problem: even if all subILPs are solved to optimality, this does not ensure that the combined MBSP schedule is a global optimum. Indeed, the ILPs now only capture specific parts of the problem, and they may ignore e.g.\ that some values would be vital to keep in cache for the following subDAG. Our experiments also confirm that on some DAGs, this method can return a worse MBSP schedule than the baseline. Nonetheless, when we manage to split the DAG into relatively disjoint parts, the divide-and-conquer ILP often allows to find notably better schedules on larger instances.

\section{Experiments and results} \label{sec:experiments}

\subsection{Experimental setup}

For our experiments, we compare the above ILP-based schedulers to a two-stage approach. For the scheduling stage, we use the recent BSPg scheduling heuristic designed for the BSP model~\cite{BSP_algos_opdas}, which focuses both on workload balancing and minimizing communication. For the memory management stage, we use the clairvoyant algorithm that considers all the following compute steps on the same processor, and whenever having to evict a value from cache, it selects the value that is not required for the longest time. Both of these algorithms excel at their respective task. We will also briefly consider two-stage baselines based on some other algorithms: the Cilk work-stealing heuristic~\cite{cilk}, an ILP-based BSP-scheduler, and the `least recently used' (LRU) cache eviction rule.

To compare the schedulers, we use the computational DAG benchmark introduced in~\cite{BSP_algos_opdas}. The smallest dataset here consists of $15$ DAGs between $40$ and $80$ nodes, describing a variety of linear algebra and graph computations. In particular, $3$ of the DAGs are coarse-grained representations of specific algorithms (BiCGSTAB, $k$-means, Pregel), and the others are more fine-grained instances of four specific problems (CG, SpMV, iterated SpMV and $k$-NN). The dataset has compute weights $\omega$, but not memory weights $\mu$, hence we add uniform random memory weights from $\{1, ..., 5\}$ to each node of the DAGs.

Since the efficiency of the ILP method can be significantly improved by starting from a good initial solution, we initialize the solvers with our baseline, and study how much our ILP schedulers can further improve on this solution. The size of the DAGs allows us to use the full ILP formulation from Section~\ref{sec:full_ILP}.

We also consider a sample from the second smallest dataset in~\cite{BSP_algos_opdas}, taking the two smallest instances for each DAG type (coarse-grained, CG, SpMV, iterated SpMV, $k$-NN). This gives us $10$ DAGs between $264$ and $464$ nodes, where applying the full ILP formulation is not viable anymore; we use this dataset to study the divide-and-conquer ILP.

Due to the relatively small DAGs, we use $P\!=\!4$ processors for most experiments. For each DAG in our experiments, we define the select size $r$ with respect to the minimal memory $r_0$ required to allow a valid schedule on the DAG (i.e.\ $r_0$ the maximal sum of memory weights for a node and its parents). In particular, in our main experiments, we set $r\! =\! 3\! \cdot\! r_0$ for each DAG. For the rest of the BSP parameters, we pick $g\!=\!1$ and $L\!=\!10$, similarly to~\cite{BSP_algos_opdas}.

For solving the actual ILP problems, we use the COPT (Cardinal Optimizer) commercial ILP solver~\cite{copt}, version 7.1.7. This is a state-of-the-art commercial solver that uses several threads in parallel to optimize the problem. For the ILP representations of the entire scheduling problem, we run the solver with a time limit of $60$ minutes; in case of the divide-and-conquer approach, we apply a time limit of $30$ minutes to each subproblem. This is typically not enough to run the solving process to optimality, but already allows to find very good solutions.

We run our experiments on an AMD EPYC 7763 processor, using 64 cores and 156 GB of memory. The implementations of our algorithms, the baselines and the test suite are available open-source as part of our OneStopParellel scheduling framework on GitHub:
\url{https://github.com/Algebraic-Programming/OneStopParallel}.

\subsection{Results}

In general, our results show that the holistic ILP-based method can often find notably better schedules than the two-stage baseline. In particular, in our main experiments with the default parameter values, the ILPs obtain a $0.77\times$ factor geometric-mean reduction in the synchronous costs compared to the baselines, which ranges from $0.99\times$ to $0.6\times$ on the concrete instances. The cost of the baseline and the ILP schedules for each instance is listed for completeness in Table~\ref{tab:tiny_dataset}. The reduction factors for this base case (and several of the following cases) are also illustrated in Figure~\ref{fig:plot}.

\begin{figure*}[t]
    \centering
    \hspace{-12pt}
    \includegraphics[scale=0.67]{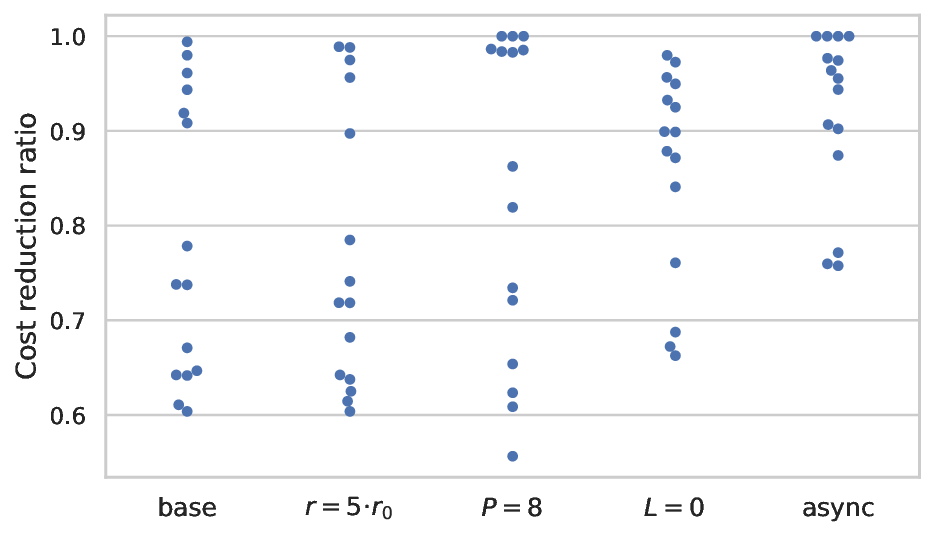}
    \caption{Illustration of the distribution of cost reduction factors for the smallest dataset of~\cite{BSP_algos_opdas}, for several different choices of parameters. The base case has $P=4$, $r = 3 \!\cdot\! r_0$ and $L=10$.}
    \label{fig:plot}
\end{figure*}

\begin{table}[t]

        \renewcommand{\arraystretch}{1.3}
        \caption{Concrete cost of synchronous MBSP schedules with the two-stage baseline / our ILP method, on each computational DAG of the smallest dataset in~\cite{BSP_algos_opdas}.}\label{tab:tiny_dataset}
        \hspace{-0.03\textwidth}
        \begin{minipage}{0.21\textwidth}
            \centering
            \begin{tabular}{ c | c |}
        		$\:$ Instance $\:$ & $\!\!$ Base / ILP $\!\!$ \\ \hline \hline
        		$\!\!\!$ bicgstab $\!\!\!$ & $\!$ 197 / 181 $\!$ \\ \hline
                $\!\!\!$k-means $\!\!\!$ & $\!$ 158 / 106 $\!$ \\ \hline
                pregel & $\!$ 206 / 152 $\!$ \\ \hline
                spmv\_N6 & $\!$ $\!$ 123 / 79 $\!$ \\ \hline
                spmv\_N7 & $\!$ 120 / 77 $\!$ \\ \hline
                spmv\_N10 & $\!$ 159 / 96 $\!$ \\ \hline
                CG\_N2\_K2 & $\!$ 283 / 267 $\!$ \\ \hline
                CG\_N3\_K1 & $\!$ 199 / 195 $\!$ \\ \hline
	   \end{tabular}
        \end{minipage}
        \hspace{0.05\textwidth}
        \begin{minipage}{0.2\textwidth}
            \centering
            \begin{tabular}{ c | c |}
        		$\:$ Instance $\:$ & $\!\!$ Base / ILP $\!\!$ \\ \hline \hline
                $\!\!$ CG\_N4\_K1 $\!\!$ & $\!$ 229 / 208 $\!$ \\ \hline
                $\!\!$ exp\_N4\_K2 $\!\!$ & $\!$ 149 / 91 $\!$ \\ \hline
                $\!\!$ exp\_N5\_K3 $\!\!$ & $\!$ 185 / 144 $\!$ \\ \hline
                $\!\!$ exp\_N6\_K4 $\!\!$ & $\!$ 169 / 168 $\!$ \\ \hline
                $\!\!$ kNN\_N4\_K3 $\!\!$  & $\!$ 179 / 132 $\!$ \\ \hline
                $\!\!$ kNN\_N5\_K3 $\!\!$ & $\!$ 167 / 108 $\!$ \\ \hline
                $\!\!$ kNN\_N6\_K4 $\!\!$ & $\!$ 180 / 173 $\!$ \\ \hline
	   \end{tabular}
            \vspace{14pt}
        \end{minipage}
        \hspace{-0.02\textwidth}
\end{table}

In order to gain a deeper understanding of the problem, we analyze how these improvement factors depend on different parameters of the model. For instance, if we use a larger memory bound of $r \! = \! 5 \! \cdot \! r_0$, we obtain very similar results: the cost here is reduced by a $0.76\times$ factor (geo-mean). However, in e.g.\ the edge case when we select the minimal $r \! = \! r_0$, the cost reduction is drastically smaller ($0.97\times$), with no improvement on $6$ out of $15$ instances. This is likely because an MBSP schedule with this $r$ is very restricted, leaving notably less degree of freedom for the ILP to improve upon the baseline.

If we increase the number of processors to $P\!=\!8$, we observe a slightly worse improvement ratio of $0.82\times$: the ILPs are more challenging to optimize in this case, since they contain almost twice as many variables.

If we still consider synchronous scheduling model, but without synchronization cost ($L=0$), then the cost reduction slightly drops, to a $0.85\times$ factor. If we switch to the asynchronous cost model instead, our methods only obtain a $0.91\times$ geomean cost reduction to the baselines. In general, our experience is that this asynchronous ILP formulation is more challenging to optimize for the ILP solver, likely due to the intricate dependencies between the non-binary $\textsc{finishtime}_{p,t}$ variables that express $\gamma(\tau_i)$.

It is also a natural idea to compare our ILP schedulers to other two-stage baselines. As a stronger baseline, we can already formulate and optimize the BSP scheduling problem as an separate ILP in the first stage, similarly to~\cite{BSP_algos_opdas}. We found that our method can still obtain a geo-mean improvement of $0.88\times$ over these more sophisticated initial schedules. Since the first stage uses the same ILP solver in this case, this again confirms the superiority of the holistic method over the two-stage approach.

It is also interesting to consider a more application-oriented two-stage baseline with well-known methods from practice. For this, we first use the Cilk work-stealing scheduler to find an efficient schedule, and then the `least recently used' (LRU) cache eviction rule for memory management. These generic heuristics may be more representative baselines in many applications than the specialized heuristics above. When compared to this baseline, our MBSP schedules amount to an even larger, $0.66\times$ geomean cost reduction.

While we focus on multiprocessor scheduling, the ILPs also give us a unique chance to analyze the single-processor case with $P\!=\!1$. This is essentially the red-blue pebble game of Hong and Kung~\cite{hongkung} with computation costs and node weights. The red-blue pebble game has been studied extensively for decades, but surprisingly, we are not aware of any experimental study of pebbling algorithms. We can simply use a DFS as our first-stage scheduler here, and we already obtain a very strong baseline for red-blue pebbling: we found that the ILPs could only improve on this baseline in $2$ out of the $15$ instances. This suggests that the strength of our scheduling approach is indeed in the fact that it can consider the multiprocessor scheduling and the memory management problems holistically.

Another interesting option is to prohibit re-computation in our schedules, i.e.\ to require that each node is only computed on one processor and only once. This simplifying assumption is used in many scheduling and pebbling models. In general, re-computations can reduce the total cost by allowing to save I/O steps; their effect has been studied theoretically before~\cite{recent4}, but not experimentally, to our knowledge. Our framework offers a convenient opportunity to study this question, since we can easily prohibit re-computation in our ILPs with some further constraints. If we do this, our experiments show up to a $1.4\times$ cost increase on some specific instances (compared to the default ILP that allows recomputation). This indicates that re-computation steps are indeed actively used in efficient schedules.

Finally, we also briefly consider how the ILP-based scheduling approach scales to larger DAGs. We consider the larger dataset of $10$ instances with $264$--$464$ nodes, using $r = 5 \! \cdot \! r_0$. Recall that these DAGs generate much larger ILP formulations, and hence we apply the divide-and-conquer algorithm on this dataset. While this method excels at optimizing each subproblem, it does not optimize for a global optimum, and hence it may fall behind the baseline. Indeed, we found that our method finds notably better schedules on specific kinds of graphs, e.g.\ the coarse-grained or the SpMV computations (see~\cite{BSP_algos_opdas}), achieving a $0.66\times$ and $0.75\times$ geomean cost reduction, respectively. However, on the remaining DAGs which do not allow for so good partitionings into loosely connected parts, the schedules returned are actually a $1.13\times$ geomean factor worse than the baseline. The concrete costs for each instance are listed in Table~\ref{tab:small_dataset}. We find it a promising direction for future work to better understand this effect, and to identify specific applications or families of DAGs where our divide-and-conquer approach can reliably outperform the baseline.

\begin{table}[!htp]

        \renewcommand{\arraystretch}{1.3}
        \caption{Cost of MBSP schedules with the baseline / our divide-and-conquer ILP, on some larger computational DAGs from~\cite{BSP_algos_opdas}. For the instances on the left side, our method can indeed find a better schedule; for those on the right, it could not outperform the two-stage baseline.}\label{tab:small_dataset}
        \hspace{-0.04\textwidth}
        \begin{minipage}{0.21\textwidth}
            \centering
            \begin{tabular}{ c | c |}
        		$\:$ Instance $\:$ & $\!\!\!$ Base / ILP $\!\!\!$ \\ \hline \hline
        		$\!\!\!$ simple\_pagerank $\!\!\!$ & $\!$ 1017 / 779 $\!$ \\ \hline
                $\!\!\!$snni\_graphchall. $\!\!\!$ & $\!$ 1531 / 912 $\!$ \\ \hline
                spmv\_N25 & $\!$ 425 / 314 $\!$ \\ \hline
                spmv\_N35 & $\!$ $\!$ 685 / 518 $\!$ \\ \hline
                CG\_N5\_K4 & $\!$ 847 / 750 $\!$ \\ \hline
	   \end{tabular}
        \end{minipage}
        \hspace{0.05\textwidth}
        \begin{minipage}{0.2\textwidth}
            \centering
            \begin{tabular}{ c | c |}
        		$\:$ Instance $\:$ & $\!\!\!$ Base / ILP $\!\!\!$ \\ \hline \hline
                $\!\!$ CG\_N7\_K2 $\!\!$ & $\!$ 701 / 701 $\!$ \\ \hline
                $\!\!$ exp\_N10\_K8 $\!\!$ & $\!$ 573 / 727 $\!$ \\ \hline
                $\!\!$ exp\_N15\_K4 $\!\!$ & $\!$ 512 / 660 $\!$ \\ \hline
                $\!\!$ kNN\_N10\_K8 $\!\!$  & $\!$ 594 / 682 $\!$ \\ \hline
                $\!\!$ kNN\_N15\_K4 $\!\!$ & $\!$ 517 / 655 $\!$ \\ \hline
	   \end{tabular}
        \end{minipage}
        \hspace{-0.03\textwidth}
\end{table}

\section{Conclusion}

Altogether, our work indicates that multiprocessor scheduling with memory constraints is a rather delicate problem on general DAGs. We have shown that a two-stage approach can be highly suboptimal in theory, and that an ILP-based holistic method can indeed return notably better schedules in many cases.

Naturally, one significant drawback of this ILP method is that it takes significantly more time to find a schedule than lightweight scheduling heuristics. As such, it is not directly applicable in many domains where the schedule computation time is also a critical factor. However, there are also many cases where the same static computation is executed many times (possibly with different inputs), e.g.\ for an important low-level operator in a neural network. In these settings, even a small improvement in the scheduling cost can have a significant impact on the long term. Furthermore, even in domains where our ILP-based method is not directly applicable, the algorithm can give valuable insights into how to design efficient schedules, or how far a scheduling heuristic is from the optimal solution.

\bibliographystyle{ACM-Reference-Format}
\bibliography{references}

\clearpage

\appendix

\section{Model definition details} \label{app:model}

We begin with some technical details omitted from the MBSP problem definition.

Throughout the pebbling, we formally define the current state of a processor $p \in [P]$ as a tuple $(R_p, B_p)$, where $R_p$ is the set of nodes that currently have a red pebble of processor $p$ on them, and $B_p$ is the set of nodes that have a blue pebble according to the current knowledge of processor $p$ (also considering the last synchronization point). Note that this $B_p$ is a rather artificial concept, since the processors actually share a common slow memory, but it allows to define the pebbling sequences on the specific processors independently from each other.

For any node $v \!\in \! V$, let $\text{Par}(v)$ and $\text{Chld}(v)$ denote the set of parents and children of $v$, respectively. A state $(R_p, B_p)$ is initial if only the source nodes of $G$ have blue pebbles, i.e., $B_p(0)=\{ v \in G \, | \, \text{Par}(v) = \emptyset \}$, and $R_p(0) = \emptyset$. A state $(R_p, B_p)$ is terminal if all the sink nodes have a blue pebble, i.e., $\{ v \in G \, | \, \text{Chld}(v) = \emptyset \} \, \subseteq \, B_p$. Given a current state of the pebbling $(R_p(t), B_p(t))$ on processor $p$, for a specific node $v \in V$, the transition rules are as follows:
\begin{enumerate}[topsep=4pt, itemsep=3pt]
\item LOAD$_{p,v}$: place a red pebble of processor $p$ on node $v$ if $v$ already has a blue pebble. I.e., if $v \in B_p(t)$, then set $R_p(t+1)=R_p(t) \cup \{v\}$ and $B_p(t+1)=B_p(t)$.
\item SAVE$_{p,v}$: place a blue pebble on node $v$ if $v$ already has a red pebble of processor $p$. I.e., if $v \in R_p(t)$, then set $R_p(t+1)=R_p(t)$ and $B_p(t+1)=B_p(t) \cup \{v\}$.
\item COMPUTE$_{p,v}$: if all the parents of a non-source node $v$ have a red pebble of processor $p$, then place a red pebble of processor $p$ on $v$. I.e, if $\text{Par}(v) \subseteq R_p(t)$ with $\text{Par}(v) \neq \emptyset$, then set $R_p(t+1)=R_p(t) \cup \{v\}$ and $B_p(t+1)=B_p(t)$.
\item DELETE$_{p,v}$: remove the red pebble of processor $p$ from node $v$. I.e., if $v \in R_p(t)$, then set $R_p(t+1)=R_p(t) \setminus \{v\}$ and $B_p(t+1)=B_p(t)$.
\end{enumerate}
A pebbling sequence on processor $p$ is a sequence of states
\[ ((R_p(0), B_p(0)), \: (R_p(1), B_p(1)), \: ..., \: (R_p(T), B_p(T))) \, , \]
or a sequence of transition rules $\Psi_p = (\tau_1, ..., \tau_T)$ from the list above, where for all $t \in [T]$, the prerequisites of transition rule $\tau_i$ are fulfilled in $(R_p(t-1), B_p(t-1))$, and we obtain $(R_p(t), B_p(t))$ from $(R_p(t-1), B_p(t-1))$ by applying rule $\tau_i$. We also require that the sequence is valid, i.e.\ it always fulfills the memory bound: for all $t \in [T] \cup \{ 0 \}$, we have $\sum_{v \in R_p(t)} \, \mu(v) \leq r$ . The cost of the sequence is
\[ cost(\Psi_p) = \sum_{t \in [T]} \, cost(\tau_t) \, . \]

Let the starting/final state of the red/blue pebbles be denoted by $R_{start}(\Psi_p)=R_p(0)$, $B_{start}(\Psi_p)=B_p(0)$, $R_{end}(\Psi_p)=R_p(T)$, $B_{end}(\Psi_p)=B_p(T)$.

Recall from before that a superstep $\Psi_p$ on processor $p$ is a $4$-tuple $(\Psi_{p, comp}, \Psi_{p, save}, \Psi_{p, del}, \Psi_{p, load})$. Previously we left out the dependence on $p$ from the notation for simplicity.

A superstep $\Psi$ is then a tuple $\Psi=(\Psi_1, ..., \Psi_P)$, where $\Psi_p$ is a superstep on processor $p$ for $p \in [P]$, and for all $p \in [P]$, we have
\begin{itemize}[topsep=4pt, itemsep=2pt]
 \item $R_{start}(\Psi_{p, save}) = R_{end}(\Psi_{p, comp})$, 
 \item $R_{start}(\Psi_{p, del}) = R_{end}(\Psi_{p, save})$,
 \item $R_{start}(\Psi_{p, load})  = R_{end}(\Psi_{p, del})$,
 \item $B_{start}(\Psi_{p, save}) = B_{end}(\Psi_{p, comp})$,
 \item $ B_{start}(\Psi_{p, del}) = \bigcup_{p \in [P]} \, B_{end}(\Psi_{p, save})$,
 \item $B_{start}(\Psi_{p, load}) = B_{end}(\Psi_{p, del})$.
\end{itemize}
An MBSP schedule is a sequence of supersteps $(\Psi^{(1)}, \Psi^{(2)}, ..., \Psi^{(m)})$ such that:
\begin{itemize}
 \item for the superstep $\Psi^{(1)}=(\Psi^{(1)}_1, ..., \Psi^{(1)}_P)$, for any $p \in [P]$, the state
 \[ \left( \, R_{start}\left(\Psi^{(1)}_{p, comp}\right), \: B_{start}\left(\Psi^{(1)}_{p, comp}\right) \, \right) \]
 is an initial state;
 \item for any two consecutive supersteps $\Psi=(\Psi_1, ..., \Psi_P)$ and $\Psi'=(\Psi'_1, ..., \Psi'_P)$ in the sequence, for any $p \in [P]$, we have
 \begin{gather*}
 R_{start}(\Psi'_{p, comp}) = R_{end}(\Psi_{p, load}) \, , \\
 B_{start}(\Psi'_{p, comp}) = B_{end}(\Psi_{p, load}) \, ;
 \end{gather*}
 \item for the superstep $\Psi^{(m)}=(\Psi^{(m)}_1, ..., \Psi^{(m)}_P)$, for any $p \in [P]$, the state
 \[ \left( \, R_{end}\left(\Psi^{(m)}_{p, load}\right), \: B_{end}\left(\Psi^{(m)}_{p, load}\right) \, \right) \]
 is a terminal state.
 \end{itemize}

In the definition of the synchronous cost function, we again omitted the dependence on processors for simplicity: for instance, $cost(\Psi_{comp})$ should instead be $cost(\Psi_{p, comp})$.

We also note that in this synchronous cost function, one might wonder if it is more natural to allow a processor to load values while another one is saving, and hence define $cost(\Psi)$ as
\begin{equation*}
\max_{p \in [P]} \, cost(\Psi_{p,comp}) + \max_{p \in [P]} \, \left( cost(\Psi_{p,save}) + cost(\Psi_{p,load}) \right) + L \, 
\end{equation*}
instead. However, such a model variant would require a more complex definition for communication phases in order to ensure consistency, i.e.\ that values are only loaded from slow memory after they become available there. This model variant would also not be a direct generalization of multiprocessor red-blue pebbling.

In the asynchronous cost function, we consider the entire sequence
\begin{gather*}
\Upsilon_p := \Psi^{(1)}_{p, comp} \circ \Psi^{(1)}_{p, save} \circ \Psi^{(1)}_{p, del} \circ \Psi^{(1)}_{p, load} \circ \ldots \\ \ldots \circ \Psi^{(m)}_{p, comp} \circ \Psi^{(m)}_{p, save} \circ \Psi^{(m)}_{p, del} \circ \Psi^{(m)}_{p, load} \, ,
\end{gather*}
and define the finishing time function $\gamma$ on this sequence. Regarding the auxiliary function $\Gamma$ which shows when a node first becomes available in slow memory, for any node $v \in V$, let $j$ be the smallest superstep index such that $\exists \, p \in [P]$ so that there is a transition SAVE$_{p,v}$ in $\Psi_{p, save}^{(j)}$. Let us denote by $UP(v)$ the set of all SAVE$_{p,v}$ transitions $\tau_i$ in $\Psi_{p, save}^{(j)}$ for some $p \in [P]$, and then we define
\[ \Gamma(v) = \min_{\tau \in UP(v)} \, \gamma(\tau) \, . \]
Note that if the schedule is valid, then $\gamma(\tau_i)$ is indeed well-defined, since $v$ must already be saved to slow memory before (or in) the superstep where it is loaded.

\section{Proof details} \label{app:proofs}

This section provides the proofs and proof details omitted from the main part of the paper.

\subsection{Proof details for Theorem~\ref{th:twostage}} \label{app:subopt}

The main idea for the proof of Theorem~\ref{th:twostage} was already outlined in the main part of the paper. We discuss some technical details here.

Our construction consists of two groups $H_1$, $H_2$ of $d$ source nodes, and two chains $v_1, ..., v_m$ and $u_1, ..., u_m$, such that $(v_i, v_{i+1}) \in E$ and $(u_i, u_{i+1}) \in E$ for all $i \in [m-1]$. If $i \in [m]$ is an odd integer, then there is also an edge to $u_i$ from all nodes of $H_1$, and to $v_i$ from all nodes of $H_2$. If $i \in [m]$ is an even integer, then there is also an edge to $u_i$ from all nodes of $H_2$, and to $v_i$ from all nodes of $H_1$. We will also use $\text{Ch}(H_1)$ and $\text{Ch}(H_2)$ to denote the children of $H_1$ and $H_2$, respectively. Assume that the node weights are uniform, i.e.\ $\omega(v)=1$ and $\mu(v)=1$ for all $v \! \in \! V$. Assume that the size of the available cache is $r=d+2$, we have $P=2$ processors, and $L=0$ in the synchronous case. For our parameters, let $g=O(1)$, and let us set $m > 2 \! \cdot \! d \! \cdot \! g + d$ to cover several model variants. With $2\cdot(d+m)$ nodes in the DAG, we can still ensure $d=\Theta(n)$ and $m=\Theta(n)$.

Note that there is a modeling question here on how we interpret our MBSP problem in a BSP scheduling setting: in BSP (and other) DAG scheduling problems, the sources of the DAG are also regular nodes that need to be computed, whereas in MBSP, the source nodes are not computed, but directly loaded from slow memory (as in red-blue pebbling). However, this only plays a minor role in our example construction. If the nodes in $H_1$, $H_2$ need to be computed, this adds a compute cost between $d$ and $2d$ to any solution. Afterwards, exchanging the values between the two processors has a cost of $d \cdot g$ in a direct communication model with $h$-relations (as in e.g.\ \cite{BSP_DAG_opdas}). Alternatively, in a BSP variant where source nodes are still uncomputed and loaded from slow memory, it takes a cost of $2\cdot d \cdot g$ to load all the source nodes to both processors. As in BSP, we assume that the sinks are simply computed but not saved or sent anywhere.

In any of these BSP variants, the optimal scheduling strategy is to compute $v_1, ..., v_m$ on one processor, and $u_1, ..., u_m$ on the other. This has a compute cost of only $m$ or $m+d$, which is the minimal possible compute cost in the respective BSP variant. The communication cost is only $d \cdot g$ or $2\cdot d \cdot g$, as discussed above. We show that the BSP cost of any other schedule is larger than this reference solution. Indeed, in both of the BSP variants, this is already the minimal amount of communication cost that can be incurred if both processors acquire all the source nodes, i.e.\ if both processors compute at least one node from both $\text{Ch}(H_1)$ and $\text{Ch}(H_2)$.

As such, there are only a few possible BSP schedules with potentially lower cost. If $\text{Ch}(H_1)$ is computed entirely on one processor, and $\text{Ch}(H_2)$ on the other, then every chain node needs to be sent between the processors, giving a communication cost of at least $(m-1) \cdot g$; with $d<m-1$, this is larger than the communication cost before. If both $\text{Ch}(H_1)$ and $\text{Ch}(H_2)$ are computed entirely on the same processor, then the compute cost is at least $2m$; with $m > 2 \! \cdot \! d \! \cdot \! g  + d$, this is also suboptimal in both BSP variants.

Finally, assume w.l.o.g.\ that $\text{Ch}(H_1)$ is computed entirely on processor $p_1$, but $\text{Ch}(H_2)$ is split between the two processors. We consider the two BSP variants separately. In the variant with loaded source nodes, loading all sources to $p_1$ already incurs a communication cost of $2\cdot d \cdot g$. In the variant with computed source nodes, if there are $x \! \geq \! 1$ nodes of $\text{Ch}(H_2)$ computed on $p_2$, then all of these nodes that are not sinks (so at least $(x-1)$ nodes) need to be sent to $p_1$ for computing their child. Besides this, making $H_2$ available on both processors also requires $d$ values to be sent, regardless of how the computation of $H_2$ is split between $p_1$ and $p_2$. This altogether gives a communication cost of at least $\frac{d+x-1}{2} \cdot g$ in the BSP model with $h$-relations. If we instead compute the entire DAG on $p_1$, then these communication steps are all saved, and compute costs are increased by at most $2d+x$. For $g=5$ and $d$ large enough, we have $\frac{d+x-1}{2} \cdot g > 2d+x$, so this indeed improves the solution. However, the compute cost is now still larger by $m+d$ than in our reference solution; with $m > 2 \cdot d \cdot g$, this is suboptimal. This establishes that the reference solution is indeed the optimal one for BSP.

Let us begin from this optimal BSP schedule and use $r=d+2$; this means that besides the current two nodes in the chain, the cache can only contain either $H_1$ or $H_2$ at any point. As such, we repeatedly need to load either $H_1$ or $H_2$ for each next chain node, and hence the number of required I/O operations becomes $d \cdot m$ on the chains.

Instead, the optimal MBSP schedule is to compute all children of $H_1$ on one processor and all children of $H_2$ on the other. This results in only $2$ I/O operations for each chain node. As such, the two-stage approach results in a compute cost of $m+O(d)$, and a communication cost of $(d \cdot m + O(d)) \cdot g$. The optimal MBSP schedule has a compute cost of $m+O(d)$, and a communication cost of $(2 \cdot m + O(d)) \cdot g$. With $d<m$ and $d=\Theta(n)$, the ratio is linear in $n$.

We note that if desired, the above proof can also be generalized to more processors as long as $P \in O(1)$. For any $P \geq 3$, we simply add $(P-2)$ copies of the following component to our DAG: a group of $d$ new source nodes, and a chain of length $m$, with each source node having an edge to each chain node. We select $m > P \! \cdot \! d \! \cdot \! g + d$. We can then use almost the same arguments as above: the optimal BSP and MBSP schedules will both use $2$ processors for the original part of the DAG as before, and a separate new processor for each of the $(P-2)$ extra components. This results in the same cost for both the two-stage approach and the MBSP optimum as before.

It is easy to see that the additional components do not affect the optimal MBSP cost; the more technical part is to check that the optimal schedule for BSP will also be the same as before. In the BSP variant where source nodes are loaded, this is simpler. If there is a processor that obtains all of $H_{1\!} \cup _{\!} H_2$, then the communication cost is already $2 \cdot d \cdot g$. If not, then every non-sink node in the original chains will be sent to some other processor, resulting in $2 \cdot (m-1)$ values sent, and a communication cost of at least $\frac{2 \cdot (m-1)}{P} \cdot g$. For $m > P \cdot d + 1$, this is already larger than $2 \cdot d \cdot g$.

On the other hand, consider the variant where source nodes are also computed. If there are at least two processors that obtain all of the values $H_{1\!} \cup _{\!} H_2$, then this incurs a communication cost of at least $d \cdot g$, since one of them must receive at least $d$ values from $H_{1\!} \cup _{\!} H_2$. If no processor obtains all of $H_{1\!} \cup _{\!} H_2$, we again get $2 \cdot (m-1)$ communicated values, and a cost of at least $\frac{2 \cdot (m-1)}{P}$. Finally, assume there is exactly one processor $p_1$ that obtains $H_{1\!} \cup _{\!} H_2$. If $p_1$ computes both original chains, we again have a compute cost of at least $2 \! \cdot \! m$. If there are $x \! \geq \! 1$ nodes in the original two chains not computed by $p$, then for each such node (apart from the sinks), its child is on a different processor, resulting in at least $(x-2)$ sent values. Besides this, at least $d$ values from $H_{1\!} \cup _{\!} H_2$ must be sent to another processor than where it was computed, summing up to a communication cost of at least $\frac{d+x-2}{P} \cdot g$. Instead computing all of the original DAG on $p$ increases the compute cost by at most $2d+x$, which is smaller for e.g.\ for $g= 3 \! \cdot \! P$ and $d$ large enough, which is still a constant choice of $g$. However, this solution is still suboptimal since its compute cost is larger than $2 \cdot m$.

\subsection{Proofs of Lemmas~\ref{lem:weakNP} and \ref{lem:strongNP}} \label{app:NPhard}

We continue with the proofs of NP-hardness for the memory management subproblem. Recall that we consider a setting here where the compute steps in all compute phases $\Psi_{comp}$ in our MBSP schedule are already fixed for every processor and superstep, but we are free to add any delete operations in the compute phases, and any save, delete and load operations in the corresponding parts of the communication phases. We also implicitly assume that the compute steps are provided such that there exists at least one valid schedule.

We assume that our goal is to minimize the I/O cost, i.e.\ the total cost of the $\Psi_{save}$ and $\Psi_{load}$ phases is the synchronous setting with $L=0$. For convenience, let us first consider a simpler case where we only need to minimize the total cost over the $\Psi_{load}$ phases.

\renewcommand{\proofname}{Proof of Lemma~\ref{lem:weakNP}}
\begin{proof}
We provide a reduction from the number partitioning problem: given integers $A=\{a_1, ..., a_m\}$ with sum $\alpha$, we need to find a subset $A_0 \subseteq A$ that sums up to $\frac{\alpha}{2}$. We can reduce this setting to a memory management problem with $r= \alpha$, and separate source nodes $v_1, ..., v_m, v'$ with memory weights $a_1, ..., a_m, \frac{\alpha}{2}$, respectively. In particular, we ensure that first the nodes $v_1, ..., v_m$ are needed in cache for a computation; we can simply ensure this with a compute step for  node $u_1$ that has all of $v_1, ..., v_m$ as parents. Next, $v'$ is needed in cache; for this we compute another node $u_2$ (in a separate, second compute phase) that has $v'$ as its single parent. Finally, let $v_1, ..., v_m$ be again needed in cache in a third compute phase, due to computing a node $u_3$ that has all of $v_1, ..., v_m$ as parents. Let the memory weights of $u_1$, $u_2$, $u_3$ be $0$ (or negligibly small).

If there is a subset $A_0$ of sum $\frac{\alpha}{2}$, then we can leave this subset of nodes in cache after the computation of $u_1$ while we load $v'$. We then only have to load the remaining nodes from $v_1, ..., v_m$ (of total weight $\frac{\alpha}{2}$) back into cache for $u_3$; this gives a total loading cost of $\alpha + \frac{\alpha}{2} + \frac{\alpha}{2} = 2\alpha$. If such a subset does not exist, then the subset of $\{ v_1, ..., v_n\}$ that remains in cache when computing $u_2$ has size strictly less than $\frac{\alpha}{2}$, so the loading cost for the third computation is more than $\frac{\alpha}{2}$. This results in a total cost of more than $2 \alpha$.
\end{proof}

\renewcommand{\proofname}{Proof of Lemma~\ref{lem:strongNP}}
\begin{proof}
Similarly to a proof in~\cite{BSP_DAG_opdas}, we consider a reduction from $3$-partition: given integers $A=\{a_1, ..., a_{3m}\}$ with sum $\alpha\cdot m$, we need to partition $A$ into $m$ disjoint subsets of sum $\alpha$ each. Let $r=\infty$, and consider nodes $v_1, ..., v_{3m}$ with memory weights $a_1, ..., a_{3m}$, respectively, which are all parents of a specific node $u$ that is computed on processor $1$ in superstep $(m+1)$. As such, the nodes $v_1, ..., v_{3m}$ all need to be loaded into the cache of processor $1$ sometime during the first $m$ loading phases $\Psi_{load}$. Furthermore, assume that processor $2$ needs to load a separate value of weight exactly $\alpha$ in each of the communication phases $1$, ..., $m$.

This means that in the first $m$ communication phases, we anyway have a loading cost of $\alpha$ due to processor $2$, summing up to $\alpha\cdot m$. In order to not increase the loading costs in any of these communication phases, we need to load values of total weight at most $\alpha$ to processor $1$ in each phase. In this case, a loading cost of $\alpha\cdot m$ can be achieved if and only if $A$ can be partitioned into $m$ subsets of sum $\alpha$ each, which completes the reduction.

It only remains to present a situation where processor $2$ needs to load a separate node of weight $\alpha$ in each communication phase $1$, ..., $m$. This can be achieved by setting $r=\alpha\cdot m$, adding a node $w$ of weight $\alpha\cdot (m-1)$ which is computed on processor $1$ in superstep $1$, and for each compute phase $i \in \{ 2, ..., m+1\}$, creating a separate node $u_i$ that has two parents: $w$, and a distinct source node of weight $\alpha$. This implies that $w$ has to be kept in the cache of processor $2$ through the whole schedule, hence processor $1$ can only load one new source node into  cache in each communication phase, and has to delete it after using it as an input in the subsequent computation. Note that we again assume that the memory weight of node $u$ and all the nodes $u_i$ are $0$.
\end{proof}

Note that the above proofs only require minimal modification for the case when the I/O costs of the $\Psi_{save}$ phases are also considered. In particular, in the reductions, only the sink nodes need to be ever saved into slow memory, so the saving costs are identical to the sum of their weights. Since the weights of the sink nodes are $0$, saving them does not affect the total cost, and hence the same proofs apply for the total I/O cost.

\subsection{Proofs of Lemmas~\ref{lem:async1} and \ref{lem:async2}} \label{app:sync}

We now analyze how different the synchronous and asynchronous optimal solutions can be from each other. We first begin with the case when we optimize for asynchronous cost, and show that this can be rather suboptimal in terms of synchronous cost.

\renewcommand{\proofname}{Proof of Lemma~\ref{lem:async1}}
\begin{proof}
Consider $P$ processors for some even integer $P$, $r=\infty$, and $g=0$ (or very small). For simplicity, let $P':=P/2$. For all $i \in [P']$, we create the following nodes in our graph: $u_{i,1}$, $v_{i,1}$, $u_{i,2}$, $v_{i,2}$, ..., $u_{i,P'}$, $v_{i,P'}$, such that for all $j \in [P'-1]$, there is an edge from both $u_{i,j}$ and $v_{i,j}$ to both $u_{i,j+1}$ and $v_{i,j+1}$. For compute weights, we select some large integer $Z$, and we set $\omega(u_{i,j})=\omega(v_{i,j})=Z$ if $i=j$, and we set $\omega(u_{i,j})=\omega(v_{i,j})=1$ if $i \neq j$.

Finally, we add a source node $s$, and draw an edge from $s$ to all $u_{i,1}$, $v_{i,1}$. This $s$ can be ignored in the rest of the analysis; it is loaded in the beginning to all processors at no cost. It only ensures that no other nodes are sources, and hence they all need to be computed.

The optimum schedule according to an asynchronous cost function is simple: we assign $u_{i,j}$ to processor $i$ and superstep $j$, and assign $v_{i,j}$ to processor $P'+i$ and superstep $j$. This provides a valid solution, and since all processor pairs have $P'-1$ supersteps of compute weight $1$ and a single superstep of compute weight $Z$, according to the asynchronous cost function, the cost is $Z+P'-1$.

However, according to a synchronous cost function, all supersteps contain a node of compute weight $Z$, and hence even with $L=0$, the total cost is $P' \cdot Z$. In contrast to this, assume we keep the assignments to processors the same, but change the assignment to supersteps: we move all the nodes of cost $Z$ into the same superstep, and the remaining nodes into the supersteps before and after. That is, for all $i \in [P']$ we place both $u_{i,j}$ and $u_{i,j}$ into superstep $P'+j-i$. This creates one superstep of cost $Z$, and $2P'-2$ supersteps of cost $1$. As such, the cost ratio between the two solutions is $\frac{P' \cdot Z}{Z+2P'-1}$, which approaches $P'=\frac{P}{2}$ as we increase $Z$ while keeping $P$ fixed.
\end{proof}

\renewcommand{\proofname}{Proof of Lemma~\ref{lem:async2}}
\begin{proof}
Let $P=5$, $r=\infty$, and $g=0$ (or very small). Consider nodes $u_1$, $u_2$ that both have edges to nodes $u_3$ and $u_4$. Let us set $\omega(u_1)=\omega(u_2)=Z-1$, and $\omega(u_3)=\omega(u_4)=2 \cdot Z$ for some large integer $Z$. We add further nodes $v_1$, $v_2$, $v_3$, $v_4$, with $v_1$ having an edge to all of $v_2$, $v_3$ and $v_4$. We set $\omega(v_1)=2 \cdot Z$ and $\omega(v_2)=\omega(v_3)=\omega(v_4)=Z-1$. Finally, we add another isolated node $w$ with $\omega(w)=Z-1$. We again add an artificial source $s$ with edges to $u_1$, $u_2$, $v_1$ and $w$.

In this DAG, the best schedule will always assign $u_1$, $u_2$ to some processors in the first superstep, and $u_3$ and $u_4$ to some processors in the second superstep. As for the rest of the nodes, we can assign $v_1$ and $w$ to the first superstep, and $v_2$, $v_3$, $v_4$ to the second superstep on separate processors; however, this means that there is a node of compute cost $2 \cdot Z$ in both supersteps, so the total cost is $4 \cdot Z$. On the other hand, we can just assign $w$ to the first superstep, $v_1$ to the second superstep, and $v_2$, $v_3$, $v_4$ to the third supersteps on separate processors; then the total cost is $(Z-1) + 2 \cdot Z + (Z-1) = 4 \cdot Z -2$. This latter cost is smaller, so this is the optimum cost. In particular, one of the optimal solutions is where we assign $w$ and $v_1$ to the same processor $p$ in supersteps $1$ and $2$, respectively.

In an asynchronous setting, the solution above still has a cost $4 \cdot Z - 2$, since this is the total computation cost on processor $p$. In contrast, the former solution discussed above ($w$ and $v_1$ both in the first superstep in different processors) only has a cost of $2 \cdot Z + (Z-1) = 3 \cdot Z - 1$. The factor of difference between the two solutions is then $\frac{4 \cdot Z - 2}{3 \cdot Z - 1}$, which approaches $\frac{4}{3}$ if we increase $Z$.
\end{proof}

\subsection{Proof of Lemma~\ref{lem:ilp}} \label{app:monotone}

Finally, we prove that even when the optimum to the ILP problem with a specific time limit $T$ has multiple empty steps, the solution can still be suboptimal.

\renewcommand{\proofname}{Proof of Lemma~\ref{lem:ilp}}
\begin{proof}
Consider the simplest case of $P\!=\!1$ processor with all weights being uniformly $1$, i.e.\ single-processor red-blue pebbling with computation costs; our proof already applies to this case. We will assume the base ILP representation that does not apply step merging; otherwise, the corresponding costs and the analysis need to be slightly adjusted.

Consider the following simple modification of the zipper gadget used in~\cite{pebbling_papp,MPP_opdas}. We take two chains $(u_1, ..., u_d)$ and $(u'_1, ..., u'_d)$ of length $d$ each. We then add a chain $(v_0, v_1, ..., v_{m})$ of length $(m+1)$. In all three chains, subsequent pairs of nodes are connected by a directed edge. Node $v_0$ has an edge from both $u_d$ and $u_d'$. For all $i \in [m]$, node $v_i$ has an edge from $u_d$ if $i$ is an odd number, and from $u'_d$ if $i$ is an even number. Finally, we add a single source node $w$ that has an edge to all other nodes, and consider $r=4$. Note that any reasonable schedule always keeps a red pebble on $w$, and hence the problem is equivalent to scheduling the rest of the DAG with $3$ red pebbles, with the modification that nodes $u_1$ and $u'_1$ can be recomputed without I/O steps at any point.

The shortest possible pebbling sequence in this DAG first loads $w$, then computes $u_1, ..., u_d$ in order (deleting the red pebble from all nodes but $u_d$), computes $u'_1, ..., u'_d$ in order (deleting the red pebble from all nodes but $u'_d$), then saves both $u_d$ and $u'_d$ to slow memory, and computes $v_0$. We can then compute $v_1$, but for this the red pebble from $u'_d$ needs to be deleted before. From here, for each $v_i$ with $i \geq 2$, we need to load an input value from slow memory ($u_d$ if $i$ is odd, $u'_d$ if $i$ is even), then we can compute $v_i$, and then delete the input that was just loaded. Finally, we save $v_{m}$ to slow memory. This is a pebbling sequence that consists of $2d+m+1$ compute and $4+(m-1)$ I/O steps, hence $s:=2m+2d+4$ steps altogether and a cost of $C_0:=2d+m+1 + (4+(m-1)) \cdot g$, both in the synchronous setting with $L=0$ and in the asynchronous setting.

Intuitively, the cost of this solution can be reduced by replacing one of the I/O steps with $d$ consecutive compute steps: instead of loading e.g.\ $u_d$, we can compute $u_1, ..., u_d$ again. This decreases the cost by $g-d$, but requires $(d-1)$ further steps in the sequence. If we select $g \geq d$, then this is indeed a solution of lower cost.

As such, consider an ILP representation of the problem with $T_0:=s+1$ steps. If $d \geq 3$, then this does not allow us to recompute $u_d$ or $u'_d$ at any point, so the optimal solution will be the one outlined above, consisting of $s$ steps; thus there can be an empty step in the ILP representation. In fact, if we increase $T_0$ to up to $s+(d-2)$, then there can be up to $(d-2)$ empty steps in an optimal solution. However, once we have $T=s+(d-1)$, we can recompute either $u_d$ or $u'_d$ once, thus enabling a solution of cost $C:=C_0-(g-d)$.
\end{proof}

\section{ILP-based scheduler} \label{app:ilps}

This section discusses the details of our ILP-based scheduling method.

\subsection{ILP representation}

Recall that our ILP representation of MBSP scheduling is centered around binary variables $\textsc{compute}_{p,v,t}$, $\textsc{save}_{p,v,t}$, $\textsc{load}_{p,v,t}$, $\textsc{hasred}_{p,v,t}$ and $\textsc{hasblue}_{v,t}$, defined for each node $v$, processor $p$ and discrete time step $t$. For convenience, when the indices $p$, $v$ and $t$ are considered or summed over the entire set of processors, nodes and time steps, respectively, we will leave out this domain from the notation; that is, we simply write $\sum_v$ instead of $\sum_{v \in V}$.

\subsubsection{Fundamental constraints.}

Recall that the fundamental linear constraints of the ILP are already summarized in Figure~\ref{fig:ilp}: 
\begin{itemize}[topsep=4pt, itemsep=3pt]
 \item constraint (1) ensures the validity of load steps,
 \item constraint (2) ensures the validity of save steps,
 \item constraint (3) ensures the validity of compute steps without step merging,
 \item constraint (4) controls the presence of red pebbles,
 \item constraint (5) controls the presence of blue pebbles,
 \item constraint (6) ensures that a processor only executes a specific operation in a step (when we do not use step merging),
 \item constraint (7) enforces the memory bound,
 \item constraint (8) handles the initial state for red pebbles,
 \item constraint (9) handles the initial state for blue pebbles,
 \item constraint (10) handles the terminal state for blue pebbles.
\end{itemize}
These constraints already ensure the proper behavior of a pebbling sequence.

Recall that the step merging optimization can allow us to reduce the number of required time steps significantly; however, this also comes with some changes to the rules above:
\begin{itemize}[topsep=4pt, itemsep=3pt]
 \item to ensure that a processor only executes a specific kind of operation in a step, we now use two extra binary variables $\textsc{compstep}_{p,t}$ and $\textsc{commstep}_{p,t}$. Then for all processors $p$ and time steps $t$, we have
 \begin{gather*}
  \sum_{v} \, \textsc{compute}_{p,v,t} \leq |V| \cdot \textsc{compstep}_{p,t} \\
  \sum_{v} \, \textsc{save}_{p,v,t} + \textsc{load}_{p,v,t} \leq 2 \cdot |V| \cdot \textsc{commstep}_{p,t} \, ,
  \end{gather*}
 and finally, we have
 \[ \textsc{compstep}_{p,t} +\textsc{commstep}_{p,t} \leq 1 \, ; \]
 \item for the validity of compute steps, for edges $(u, v) \in E$, processors $p$ and time steps $t$, we now instead have \[ \textsc{compute}_{p,v,t} \leq \textsc{hasred}_{p,u,t} + \textsc{compute}_{p,u,t} \, . \]
\end{itemize}

\subsubsection{Supersteps and cost functions.}

We note that in this ILP representation, the schedule is not directly organized into supersteps; instead, the superstep will be formed from the consecutive blocks of compute steps and I/O steps in the sequence.

For the asynchronous version of this problem, we do not even need to consider the supersteps. In this case, the role of the function $\gamma$ is fulfilled by continuous variables $\textsc{finishtime}_{p,t}$ for all processors $p$ and time steps $t$, and the role of $\Gamma$ is fulfilled by continuous variables $\textsc{getsblue}_{v}$ for all nodes $v$. We also define a large constant $M=P \cdot (\sum_v \omega(v) + g \cdot \mu(v))$.

Recall that for $\textsc{finishtime}_{p,t}$, we have the constraint
\begin{equation*}
\begin{split}
\textsc{finishtime}_{p,t} \geq \textsc{finishtime}_{p,(t-1)} \, + \\ + \, \sum_v \, \omega(v) \cdot \textsc{compute}_{p,v,t} + g \cdot \mu(v) \cdot (\textsc{save}_{p,v,t} + \textsc{load}_{p,v,t}) \, ,
\end{split}
\end{equation*}
and for the dependence on the slow memory, for all nodes $v$, processors $p$ and time steps $t$, we have
\begin{equation*}
 \textsc{getsblue}_{v} \geq \textsc{finishtime}_{p,t} - M \cdot (1 - \textsc{save}_{p,v,t}) \, , \\
\end{equation*}
as well as
\begin{gather*}
\textsc{finishtime}_{p,t} \geq \textsc{getsblue}_{v} \, + \\ + \,  g \cdot \sum_{u \in V} \, \mu(u) \cdot \textsc{load}_{p,u,t}  - M \cdot (1 - \textsc{load}_{p,v,t})  \,  \, ;
\end{gather*}
note that in contrast to the simplified version in Section~\ref{sec:full_ILP}, this latter constraint also works with step merging. These constraints motivate the ILP solver to only save each node $v$ to slow memory at most once, at the save operation with the earliest finishing time; the finishing time of this save step will be a lower bound on $\textsc{getsblue}_{v}$. When the value of $v$ is loaded to some processor $p$, then the finishing time of this operation can only start at the time $\textsc{getsblue}_{v}$; the sum in the right hand side is the length of this loading operation, so the condition indeed expresses its earliest finishing time.

Finally, acontinuous variable $\textsc{makespan}$ is used with the constraints $\textsc{finishtime}_{p,t} \leq \textsc{makespan}$ for all $p$ and $t$, and the objective is simply to minimize $\textsc{makespan}$.

In contrast, in the synchronous setting is more complex. Firstly, we add a binary variable $\textsc{compphase}_{t}$ for each time step $t$, and for each $t$, and require that
\[ \sum_v \, \sum_p \, \textsc{compute}_{p,v,t} \leq P \cdot |V| \cdot \textsc{compphase}_{t} \, . \]
Without step merging, we similarly define $\textsc{savephase}_{t}$, $\textsc{loadphase}_{t}$ for save and load operations; with step merging, we instead have a single binary variable $\textsc{commphase}_{t}$ and
\[ \sum_v \, \sum_p \, \textsc{save}_{p,v,t} +\textsc{load}_{p,v,t} \leq 2 \cdot P \cdot |V| \cdot \textsc{commphase}_{t} \, . \]
For all $t$, we then also set $\textsc{compphase}_{t} + \textsc{commphase}_{t} \leq 1$ or $\textsc{compphase}_{t} + \textsc{savephase}_{t} + \textsc{loadphase}_{t} \leq 1$.

We then use binary variables $\textsc{compends}_{t}$ to indicate the endpoints of compute phases. The process is similar with $\textsc{commends}_{t}$ in case of step merging, and $\textsc{saveends}_{t}$, $\textsc{loadends}_{t}$ without it.

We then use a continuous variable $\textsc{compuntil}_{p,t}$ for all processors $p$ and time steps $t$, for which we set
\begin{equation*}
\begin{split}
 \textsc{compuntil}_{p,t} \geq \textsc{compuntil}_{p,(t-1)} \, + \\
 + \, \sum_v \omega(v) \cdot \textsc{compute}_{p,v,t} - M \cdot \textsc{commends}_{t} \, .
\end{split}
\end{equation*}
This constraint ensures that the compute costs of the nodes are added up and increasing in the $\textsc{compuntil}_{p,t}$, but only until we have a step with $\textsc{commends}_{t}=1$ (exactly before the beginning of the next compute phase), where they can be set back to $0$, since the large constant $M$ definitely makes the right side of the inequality negative. The process is similar for I/O costs.

Finally, we have continuous variables $\textsc{compinduced}_{t}$, which ensure that these summed up costs are only added to the objective function in the steps where we have $\textsc{compends}_{t}=1$. Recall that for this, for all processors $p$, we have
\[
 \textsc{compinduced}_{t} \geq \textsc{compuntil}_{p,t} - M \cdot (1 - \textsc{compends}_{t}) \, , \]
which also takes the maximum of the compute costs on the processors at the same time. The maximal I/O costs are again obtained in an identical way with a variable $\textsc{comminduced}_{t}$. Finally, the objective of the ILP problem is to minimize the expression
\[ \sum_t \, \textsc{compinduced}_{t} + \textsc{comminduced}_{t} + L \cdot \textsc{commends}_{t} \, . \]

The above conditions are slightly different when the indices $(t-1)$ or $(t+1)$ do not exist, and hence some variables are left out. 

\subsubsection{Further discussion.}

We also note that the values of some of our binary variables is already pre-determined in the problem; for instance, for a source node $v$, we always have $\textsc{hasblue}_{v,t}=0$ and $\textsc{compute}_{p,v,t}=0$. Due to the interface of the solver where it is easier to create variables in arrays, it is more convenient to still have these variables created. Then as a simpler solution, one could add an extra constraint to fix these variables to the appropriate value; with this, the preprocessing phase of a sophisticated ILP solver can remove the corresponding variables from all their constraints. Nonetheless, we have found that instead, it improves efficiency to edit the corresponding linear constraints manually, and ensure that these variables are never added at all.

Recall that in order to improve the efficiency of the solving process, we always initialize the ILP solver with an initial MBSP schedule that is found with the two-step approach. In our ILP, we set the number of steps $T$ slightly higher than the number of (merged) time steps in the ILP representation of this initial solution.

\subsection{Divide-and-conquer algorithm} \label{app:dnq}

Recall that we also introduce an approach to handle DAGs of larger size (e.g. a few hundred nodes) by splitting them into smaller scheduling subproblems. This divide-and-conquer approach consist of several different steps.

Firstly, we want to partition the input DAG into smaller parts such that the quotient graph is acyclic (often called an `acyclic partitioning' of the DAG). We can then develop a schedule for each part independently, and then concatenate these into a schedule for the whole DAG. We also want to ensure that the parts are relatively disjoint from each other, with only a few edges between them.

This partitioning problem can also be formulated as an ILP, where the number of cut hyperedges (which is a known to be an accurate indicator of the amount of communicated data~\cite{papp2023partitioning}) is in the objective function, and the acyclicity condition is expressed with linear constraints. The size of this ILP representation is significantly smaller than the ILP of the scheduling problem on the same DAG, since the partitioning problem does not include a time dimension in the variables. Our experience shows that satisfying the acyclicity constraint for more than $2$ parts is significantly more challenging for the COPT solver than for $2$ parts; in fact, when we only have $2$ parts (known as bipartitioning), COPT has almost always found the optimal acyclic partitioning of any of our DAGs within a second.

As our first step, we use the ILP-based acyclic bipartitioning method outlined above, and we apply this recursively to split the DAG into smaller subDAGs of the desired size. In particular, in each step, we consider one of our subDAGs that has $n_0$ nodes, and if we still have $n_0 > 60$, then we further bipartition it acyclically into two parts with at least $n_0 / 3$ nodes in each.

Then as a second step, we create a scheduling plan for the quotient graph with an adjusted version of the BSPg scheduling heuristic~\cite{BSP_algos_opdas}. In the quotient graph, we simply sum the weights $\omega$ and $\mu$ to assign weights to each contracted node. We can then naturally modify the BSPg heuristic such that it allows to keep assigning several processors to a node, with the corresponding execution time reduced proportionally. The resulting high-level schedule tells us which set of processors should be used for each partition, i.e.\ each scheduling subproblem.

As discussed in Section~\ref{sec:dnq}, in the third step, we use our ILP-based scheduler to find a good schedule for each scheduling subproblem, with the appropriate modifications to the ILP representation.

As a final step, we concatenate the subDAG schedules into an MBSP schedule for the entire problem. For each concrete subproblem, the superstep indices are now adjusted to begin at the maximal value of the current superstep indices of the processors used in the subproblem. After this combined MBSP schedule is created, there are a few technical steps to streamline and further improve this schedule. For instance, we check if some of the consecutive compute phases can be merged into a single superstep, which may happen on the border of two consecutive subschedules. We also check whether we can remove some delete steps that were artificially inserted at the border of two subschedules. For instance, if a node $v$ only appears as a parent node in the first and third subDAGs, then it is deleted from cache before the second subschedule (since $v$ is not even present in the corresponding ILP problem); however, if the second subproblem does not utilize the cache to its full capacity, we can potentially avoid this deletion of $v$ to forego a load operation of $v$ in the third subDAG.

Recall that our experiments show that this approach is only able to provide an improvement over the two-stage baseline on specific DAGs; on others, it can actually return a worse MBSP schedule than the baseline. This is due to the fact that the ILPs now only capture a specific part of the problem, and the cost functions they optimize is not the global optimum. It is a promising direction for future work to gain a deeper understanding of this phenomenon, and to identify specific applications or families of DAGs where our divide-and-conquer approach can reliably outperform the baseline.

\section{Experiment details} \label{app:experiments}

Finally, we discuss details of the experiments and empirical results.

\subsection{Experimental setup}

As our main dataset, we use the smallest dataset of computational DAGs from~\cite{BSP_algos_opdas} called `tiny' in their paper. This contains $3$ coarse-grained DAGs, and 3-3 fine-grained representations of different-size instances of CG, SpMV, iterated SpMV and $k$-NN. We refer the reader to~\cite{BSP_algos_opdas} for more details.

We also use a small sample of $10$ DAGs from the next dataset (`small') of~\cite{BSP_algos_opdas}. $9$ out of these have between $264$ and $322$ nodes, while one of the SpMV DAGs has $464$ nodes.

Since the DAGs have no memory weights, we assigned uniformly and independently at random a weight $\mu(v) \! \in \! \{1, 2, 3, 4, 5\}$ to each node; this is in the same magnitude as the compute weights, so with a choice of $g\!=\!1$, this means that both the computation and I/O costs play an important role in the problem.

We use the COPT ILP solver to find good solutions to our scheduling (sub)problems. While the solving process in COPT is highly customizable, we do not explore this further, and leave most of the parameters of COPT at their suggested default value.

As a baseline scheduler in the two-stage approach, we use the BSPg scheduling heuristic from~\cite{BSP_algos_opdas}. We also consider the Cilk work-stealing scheduler for this task; a version of this scheduler adapted to the BSP model is also included in our OneStopParallel framework. For memory management, the clairvoyant algorithm simply considers the following compute steps on the same processor to establish for each value in cache when this value is next required in the future. When having to evict a value from cache, it always evicts the value that is not required for the longest time. Note that in our implementation, values that are not required anymore in the future are always evicted from cache automatically. As an alternative memory management approach, we consider the `least recently used' method, which maintains for each value in cache the last time they were active (computed or used as an input), and whenever a value needs to be evicted, it selects the value that was active the longest time ago.

Besides our algorithms, the OneStopParallel framework also provides a test suite to run our experiments, which outputs the costs of the obtained MBSP schedules into a .csv file.

\begin{table*}[t]
    \renewcommand{\arraystretch}{1.3}
	\caption{Cost of MBSP schedules on each instance of our main dataset. The $5$ columns respectively correspond to: 1) our main baseline (BSPg + Clairvoyant algorithm), 2) our ILP-based MBSP scheduler initialized with this main baseline, 3) the weaker baseline of Cilk + LRU, 4) the stronger baseline of an ILP-based BSP scheduler + the Clairvoyant algorithm, and 5) our ILP-based MBSP scheduler initialized with this stronger baseline. Note that the first two columns correspond to the values in Table~\ref{tab:tiny_dataset}.} \label{tab:main_exps}
	\centering
	\begin{tabular}{ c || c | c || c || c | c ||}
		$\:$ Instance $\:$ & $\:$ Baseline $\:$ & $\:$ \makecell{Our ILP \\ for MBSP} $\:$ & $\,$ \makecell{Weak base \\ (Cilk+LRU)} $\,$ & $\:$ \makecell{BSP ILP \\ baseline} $\:$ & $\:$ \makecell{BSP ILP \\ + our ILP} $\:$ \\ \hline \hline
		bicgstab & 197 & 181 & 212 & 135 & 122 \\ \hline
        $k$-means & 158 & 106 & 163 & 100 & 98 \\ \hline
        pregel & 206 & 152 & 210 & 160 & 145 \\ \hline
        spmv\_N6 & 123 & 79 & 166 & 92 & 79 \\ \hline
        spmv\_N7 & 120 & 77 & 138 & 92 & 75 \\ \hline
        spmv\_N10 & 159 & 96 & 190 & 111 & 94 \\ \hline
        CG\_N2\_K2 & 283 & 267 & 310 & 214 & 194 \\ \hline
        CG\_N3\_K1 & 199 & 195 & 263 & 287 & 281 \\ \hline
        CG\_N4\_K1 & 229 & 208 & 268 & 324 & 314 \\ \hline
        exp\_N4\_K2 & 149 & 91 & 152 & 104 & 90 \\ \hline
        exp\_N5\_K3 & 185 & 144 & 251 & 214 & 147 \\ \hline
        exp\_N6\_K4 & 169 & 168 & 225 & 210 & 200 \\ \hline
        kNN\_N4\_K3 & 179 & 132 & 170 & 132 & 108 \\ \hline
        kNN\_N5\_K3 & 167 & 108 & 192 & 144 & 108 \\ \hline
        kNN\_N6\_K4 & 180 & 173 & 241 & 181 & 178 \\ \hline
	\end{tabular}
\end{table*}

\subsection{Experimental results}

Recall that for our main experiment, we use $P=4$ processors, the synchronous model with $L=10$, and a memory bound of $r=3 \cdot r_0$. For completeness, we also present the costs of the concrete MBSP schedules for each instance in Table~\ref{tab:main_exps}. We also include the practical baseline (Cilk + LRU) in this table, as well as the stronger baseline with the ILP-based BSP scheduling.

The corresponding geomean factor improvements in cost are listed in the main part of the paper. The results indicate that the ILP-based scheduler can indeed improve significantly on the main baseline. The factor of improvement also varies significantly between the concrete computational DAGs: the smallest improvement is on exp\_N6\_K4, where the cost is only reduced by $1$, whereas the largest difference is on spmv\_N10, and it amounts to a $0.60\times$ factor.

When considering the two-stage baseline with the Cilk scheduler and the LRU eviction policy, which may be more representative of actual applications, we see that this returns weaker schedules than our main baseline, and hence the improvements are even larger with respect to these schedules. The smallest improvement with respect to this baseline is observed on CG\_N2\_K2, and amounts to $0.86\times$. The largest improvement here is on the DAG spmv\_N6, and it amounts to a $0.48\times$ factor.

We also see that in general, the two-stage approach with the ILP-based BSP scheduler mostly returns stronger baseline schedules, and from these initial solutions, our MBSP ILPs can often find even better schedules in the end. Note that there are also some case where the ILP-based baseline returns a weaker solution than our main baseline. While counter-intuitive, this can indeed happen: the BSP-based ILP scheduler is in fact optimizing for an inappropriate cost function by ignoring the memory limitations. In these cases, our own ILP-scheduler also ends up with a weaker MBSP schedule in the end, since it is launched with a weaker initial solution. 

For the remaining experiments, we only consider the main baseline and the ILP method, and hence we show them in the same column, separated by a `$/$' sign. Table~\ref{tab:main_other} contains the schedule costs for some other variants of our experiments: with a memory bound of $r=5 \! \cdot \! r_0$ or $r= r_0$, with $P=8$ processors instead of $4$, with $L=0$, and with the asynchronous cost function.

\begin{table*}[t]
    \renewcommand{\arraystretch}{1.3}
    \caption{Cost of MBSP schedules with the baselines / our ILP methods in alternative cases: 1) with $r\!=\!5 \! \cdot \! r_0$, 2) $r\!=\!r_0$, 3) with $P\!=\!8$, 4) with $L\!=\!0$, 5) with the asynchronous cost function.} \label{tab:main_other}
	\centering
	\begin{tabular}{ c || c | c | c | c | c ||}
		$\:$ Instance $\:$ & $\:\,$ $r=5 \! \cdot \! r_0$ $\:\,$ & $\;\;\;$ $r= r_0$ $\;\;\;$ & $\;\;\;$ $P=8$ $\;\;\;$ & $\;\;\;$ $L=0$ $\;\;\;$ & $\;\;\;$ async $\;\;\;$ \\ \hline \hline
		bicgstab & 197 / 146 & 221 / 213 & 176 / 173 & 117 / 89 & 92 / 83 \\ \hline
        $k$-means & 158 / 124 & 176 / 173 & 156 / 102 & 88 / 74 & 75 / 68 \\ \hline
        pregel & 206 / 148 & 222 / 222 & 160 / 138 & 146 / 142 & 135 / 118 \\ \hline
        spmv\_N6 & 123 / 79 & 167 / 116 & 104 / 75 & 83 / 55 & 70 / 54 \\ \hline
        spmv\_N7 & 120 / 75 & 134 / 132 & 83 / 68 & 80 / 55 & 66 / 50 \\ \hline
        spmv\_N10 & 159 / 96 & 215 / 215 & 124 / 69 & 119 / 80 & 104 / 79 \\ \hline
        CG\_N2\_K2 & 283 / 193 & 366 / 366 & 295 / 291 & 163 / 152 & 133 / 133 \\ \hline
        CG\_N3\_K1 & 199 / 194 & 343 / 341 & 176 / 176 & 129 / 116 & 112 / 107 \\ \hline
        CG\_N4\_K1 & 229 / 219 & 343 / 343 & 205 / 202 & 159 / 151 & 122 / 122 \\ \hline
        exp\_N4\_K2 & 149 / 95 & 201 / 195 & 138 / 84 & 89 / 80 & 71 / 67 \\ \hline
        exp\_N5\_K3 & 185 / 166 & 261 / 261 & 185 / 182 & 115 / 110 & 89 / 89 \\ \hline
        exp\_N6\_K4 & 169 / 167 & 257 / 254 & 165 / 165 & 99 / 97 & 83 / 80 \\ \hline
        $\,$ kNN\_N4\_K3 $\,$ & 179 / 110 & 242 / 242 & 143 / 105 & 109 / 95 & 78 / 76 \\ \hline
        $\,$ kNN\_N5\_K3 $\,$ & 167 / 120 & 213 / 212 & 162 / 101 & 107 / 94 & 86 / 84 \\ \hline
        $\,$ kNN\_N6\_K4 $\,$ & 180 / 178 & 302 / 297 & 190 / 190 & 120 / 111 & 87 / 87 \\ \hline
	\end{tabular}
\end{table*}

The data allows us to investigate how the different choices of the memory bound parameter affect the cost of the schedules. The table shows that going from $r=3 \! \cdot \! r_0$ to $r=5 \! \cdot \! r_0$ does not affect the baseline, which suggests that this small change in the memory bound has little effect on the problem as a whole. However, this still implies that the ILP solver will make different decisions when exploring the search space, and hence it often ends up with a significantly different MBSP schedule. While the final cost is smaller in most cases with $r=5 \! \cdot \! r_0$, there are also a few instances where the solver ends up with a slightly worse solution, even though the memory bound is looser.

In contrast to this, when the memory bound is set to the minimal value of $r=r_0$, the costs of the schedules notably increase, and the improvements to the baseline also drop, since this setting allows for less freedom when designing our schedules. The largest increase to the baseline in this setting is of a $0.69\times$ factor on spmv\_N6; for all other instances, the value is above $0.96\times$.

When switching to $P\!=\!8$ processors, we usually see a reduction in the baseline cost, since there are now more opportunities to parallelize the computation. However, the effects on the ILP cost are more ambiguous, since this also comes with almost twice as many variables in the ILP representation, and hence the task of the ILP solver becomes significantly more challenging.

When changing $L\!=\!10$ to $L\!=\!0$, the costs of the schedules drop significantly. The difference between the ILP and the baseline also becomes slimmer in this case, since one of the main strengths of the ILP approach is that it also directly considers and minimizes synchronization costs, in contrast to the baseline.

Finally, in the asynchronous case, we again see schedules of significantly lower cost. This setting is much more challenging for our ILP-based solvers due to the interdependences between the finishing time variables. We see that the improvement on the SpMV instances still remains $0.76\times$-$0.77\times$, but on many other instances, the solver cannot improve on the initial solution at all.

Note that we also ran some experiments with a choice of $P\!=\!1$, which is essentially the single-processor red-blue pebble game of Hong and Kung~\cite{hongkung} extended with computation costs and node weights. In this case, our baseline corresponds to a DFS ordering combined with the clairvoyant cache eviction strategy. We found that this is a rather strong baseline, and our schedulers could almost never improve upon it. In particular, for $r=3 \! \cdot \! r_0$, the ILP only improved upon the baseline for two instances: $0.89\times$ for exp\_N4\_K2, and $0.63\times$ for exp\_N5\_K3. In case of $r=r_0$, the ILP could not improve on the initial schedule at all for any of the instances.

We also briefly considered the ILP-based method without recomputation, i.e.\ when each node can only appear in a single compute step over all processors and time steps. The BSPg scheduler also fulfills this property, so the setting still allows us to use this baseline for initializing the ILP without any changes; we only restrict the search space for the MBSP schedules considered by the ILP solver. This resulted in schedules of higher cost for $7$ out of the $15$ instances, with the largest increase of $1.40\times$ obtained on kNN\_N5\_K3. However, there were also $6$ instances where counter-intuitively, this resulted in schedules of smaller cost in the end, since the extra constraints can also be a major help for the ILP solver when optimizing the solution within a fixed time limit.

Finally, we briefly consider our approach on $10$ DAGs from the larger dataset. Due to the much larger size of these instances, we used the divide-and-conquer ILP here. We also used $r=5 \!\cdot \! r_0$ to accommodate these larger DAGs where we may have much more intermediate steps before a given value in cache is reused.

The cost of the schedules in each instance is summarized in Table~\ref{tab:small_dataset}. One can observe that the ILP can find significantly better schedules on some instances: on the coarse-grained instances simple\_pagerank and snni\_graphchallenge, the improvements are $0.77\times$ and $0.60\times$, respectively, and on the SpMV instances spmv\_N25 and spmv\_N35, the improvements are $0.74\times$ and $0.76\times$, respectively. The cost also improves by $0.89\times$ on the instance CG\_N5\_K4, and is identical on CG\_N7\_K2. However, on the remaining instances, the ILP-based schedule is actually worse than the baseline, with a geomean increase of $1.24\times$ and a largest increase of $1.29\times$ on exp\_N15\_K4. This effect is discussed in more detail in Section~\ref{app:dnq}.

\end{document}